
\documentclass[journal,oneside,final,twocolumn,letterpaper]{IEEEtran}

\usepackage{epsf}
\usepackage{amssymb}
\usepackage{graphicx}
\usepackage{amsmath}
\usepackage[english]{babel}
\usepackage{mathrsfs}

\newtheorem{thm}{Theorem}
\newtheorem{lemma}[thm]{Lemma}
\newtheorem{definition}{Definition}

\newtheorem{algorithm}[definition]{Algorithm}

\newtheorem{problem}{Problem}
\newtheorem{simulation}{Simulation Setup}

\begin{document}

\title {Decentralized Sequential Change Detection Using Physical Layer Fusion}
\author {Leena Zacharias and Rajesh~Sundaresan
\thanks{Leena Zacharias is with Beceem Communications Pvt. Ltd., Bangalore, India, and Rajesh Sundaresan is with the Department of Electrical Communication
Engineering, Indian Institute of Science, Bangalore, India.}
\thanks{This work was supported by the Defence Research \& Development
Organisation (DRDO), Ministry of Defence, Government of India
under a research grant on wireless sensor networks (DRDO 571, IISc).}}

\maketitle

\begin{abstract}

The problem of decentralized sequential detection with conditionally
independent observations is studied. The sensors form a star
topology with a central node called fusion center as the hub. The
sensors make noisy observations of a parameter that changes from an
initial state to a final state at a random time where the random
change time has a geometric distribution. The sensors amplify and
forward the observations over a wireless Gaussian multiple access
channel and operate under either a power constraint or an energy
constraint. The optimal transmission strategy at each stage is shown
to be the one that maximizes a certain Ali-Silvey distance between
the distributions for the hypotheses before and after the change.
Simulations demonstrate that the proposed analog technique has lower
detection delays when compared with existing schemes. Simulations
further demonstrate that the energy-constrained formulation enables
better use of the total available energy than the power-constrained
formulation in the change detection problem.

\end{abstract}

\begin{keywords}
Ali-Silvey distance, change detection, correlation, Markov decision
process, multiple access channel, sequential detection, sensor
network
\end{keywords}

\section{Introduction}
\label{sec:intro}

Consider the use of a wireless sensor network for detection of a disruption or a change in environment. The change is required to be detected with minimum delay subject to a false alarm constraint. The standard medium access control and physical layer design for such a network (e.g., IEEE 802.15.4 standard) is one where sensors quantize their observations and send them to a fusion center via random access over a wireless Gaussian multiple-access channel (GMAC). The transmitted data are typically quantized individual log-likelihood ratios (LLR) of the hypotheses representing the environment before and after the change. The fusion center collects each sensor's LLR and adds them to get a fused statistic, if observations at sensors are independent conditioned on the state of the environment; this would be the case when the observation noises are additive and independent from sensor to sensor\footnote{As we will see later, conditional independence notwithstanding, sensor observations are correlated.}. Such a design has a few drawbacks.
\begin{enumerate}
\item It does not exploit the spatial correlation in observations across sensors.
\item It does not exploit the superposition available on the GMAC.
\item It employs an ad hoc separation between quantization or compression on one hand, and transmission across the channel on the other; the latter requires adequate coding for noiseless reception and correct further processing at the fusion center.
\item It requires sufficient time slots for sensors to resolve all channel contentions\footnote{Alternatively, a time-division multiplexing protocol needs as many slots as there are sensors, and does not scale with the number of sensors.}.
\end{enumerate}

Our goal in this paper is to detect change in environment in a manner that addresses the aforementioned drawbacks. Specifically, we consider a ``star'' topology of sensors. Sensors make an affine transformation of the observed data and transmit the output in an analog fashion over the GMAC. Given that observations at sensors at any instant are spatially correlated, only the sum of the LLRs is relevant to the decision maker, i.e., it is a sufficient statistic to decide on the change. By making the sensors simultaneously transmit an affine function of their LLRs in an analog fashion, and via distributed transmit beamforming, we exploit the spatial correlation in sensor data and the superposition available on the GMAC -- the channel computes the required sum. Moreover, the analog data is in loose terms {\em matched} to the channel and does not require explicit channel coding. Finally, the sum is available at the fusion center in a single transmit duration unlike the situation in the random access case.

The biggest challenge in our proposed technique is the practicality of distributed transmit beamforming. The transmitters' clocks should be synchronized to some extent, so that carrier, phase, and symbol ticks align. A technique similar to the master-slave architecture proposed by Mudumbai, Barriac \& Madhow \cite{200705TWC_MudBarMad} can be used to achieve this synchronization. The scheme exploits channel reciprocity in a time-division duplex (TDD) system.

\subsubsection{Organization and preview of main results}
In Section \ref{sec:power}, we formulate and solve a change detection problem under a power-constrained setting\footnote{Sensors are usually powered by batteries with a fixed energy. The power-constrained model arises when this energy is evenly split over the desired life time of the sensor (in samples). An energy-constrained model arises when there is flexibility in how this energy is expended from sample to sample (subject to, of course, constraints imposed by the power amplifier).}. We arrive at a Markov decision problem framework and show that parameters of the affine transformation should minimize the variance of the combined observation and GMAC noises, which turns out to be a non-convex optimization problem. We then provide an explicit algorithm to compute the optimal control parameters. Section \ref{sec:energy}  considers an energy-constrained setting. Section \ref{sec:performance} compares the simulation performance of our scheme with a previously known scheme. It also compares the energy-constrained formulation of Section \ref{sec:energy} with the power-constrained formulation of Section \ref{sec:power}.
Appendix \ref{app:AliSilvey} contains a new characterization of optimal control: maximize a certain Ali-Silvey distance \cite{1966xxJRSS_AliSil} between the distributions of the fusion center's observation before and after the change. This is used to arrive at the minimum variance criterion of Section \ref{sec:power}.

\subsubsection{Prior work}
Change detection problems were solved in a centralized setting by Page \cite{195406BIO_Pag}, Lorden \cite{197112AMS_Lor}, and Shiryayev \cite{Shi-OSR78}. Shiryayev considered a Bayesian setting which is of relevance to our work.  Veeravalli \cite{200105TIT_Vee} solved the decentralized version of this problem with parallel error-free bit pipes of limited capacity from the sensors to the fusion center and identified the optimal stopping policy and quantizer structure. These results are analogous to those for hypothesis testing and sequential hypothesis testing (Tsitsiklis \cite{1990xxDD_Tsi}, Veeravalli et al. \cite{199303TIT_VeeBasPoo}). Prasanthi \cite{200606The_PraKum} considered access and decision delays in sequential detection over a random access channel, as it would be practically implemented using, for example, the IEEE 802.15.4 wireless personal area network standard. (See also \cite{200609SECON_PraKum}). Our work differs from those of Prasanthi and Veeravalli because we propose an analog transmission strategy.

Analog transmissions are optimal for transmission of a single Gaussian source over a Gaussian channel (Berger \cite[p.100]{Ber-RDT71}) and a bivariate Gaussian
source over a GMAC for a certain range of signal-to-noise ratios (SNR) (Lapidoth and Tinguely \cite{200607ISIT_LapTin}), when a running estimate is required.
Analog transmission via waveform design was considered by Mergen and Tong \cite{200602TSP_MerTon}. They used ``type-based'' multiple access to estimate a parameter over a GMAC. Their scheme, as does ours, exploits the superposition available in the GMAC. (See also \cite{200703TSP_MerNawTon}, \cite{200705TSP_LiuSay}, \cite{200707TSP_LiuGamSay}, \cite{200310Allerton_GasVet}, \cite{200305TIT_GasRimVet}, and \cite{200504JSAC_GasVet} for analog transmission in other settings). Ertin and Potter \cite{199802AFLS_ErtPot} considered generalized cost functions which is mathematically analogous to our
energy-constrained formulation.

\section{Physical Layer Fusion Framework}
\label{sec:power}

\subsection{Mathematical Formulation}

$X \sim \mathcal{N}(\theta,\sigma^2)$ indicates that $X$ is a Gaussian
random variable with mean $\theta$ and variance $\sigma^2$.

(1) The state of nature is described by $\left\{\theta_k : k \in
\mathbb{Z}_+\right\}$, a two-state discrete-time Markov chain taking
values in $\{ m_0, m_1 \}$, with transition probabilities as described in Fig. \ref{fig:setup}(a)-(b). The quantities $m_0$ and $m_1$ denote,
for example, the mean level of the observations before and after the
disruption. The initial distribution for this Markov chain is
obtained from $\Pr \{ \theta_0 = m_1 \} = \nu$.
The change time $\Gamma$ is $\mathbb{Z}_+$-valued, and given the event $\{ \Gamma > 0 \}$,
$\Gamma$ has the geometric distribution.

\begin{figure}
\centering
\includegraphics[height=4.5in,width=3.49in]{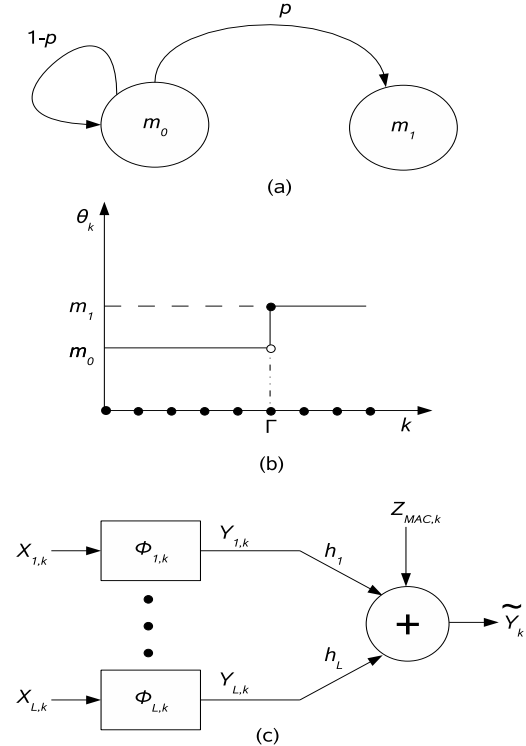}
\caption{Problem set-up.}
\label{fig:setup}
\end{figure}

(2) The network has $L$ sensors. At time $k$, sensor $S_l$ makes an
observation $X_{l,k} \sim \mathcal{N}(\theta_k,
\sigma_{\textsf{obs},l}^2)$, i.e.,
  $
     X_{l,k} = \theta_k + Z_{l,k},
  $
  where $Z_{l,k} \sim \mathcal{N}(0, \sigma_{\textsf{obs},l}^2), l = 1,\ldots,L$.

(3) The observations at each sensor are independent, conditioned on
$\theta_k$. Furthermore, the observations are independent
  from sensor to sensor, conditioned on $\theta_k$. Despite these conditional
independence assumptions, we remark that $X_{l,k}, l = 1, \ldots L$,
are correlated.

(4) Each sensor transmits $Y_{l,k} = \phi_{l,k} (X_{l,k})$; this
being a function only of the observation at sensor $l$, our setting
is a decentralized one. See Fig. \ref{fig:setup}(c). The function $\phi_{l,k}$ is affine:
\begin{equation}\label{eqn:affine}
\phi_{l,k}(x) = \alpha_{l,k}(x - c_{l,k}).
\end{equation}
Quantities $\alpha_k = (\alpha_{1,k},\ldots,\alpha_{L,k})$ and
$c_k = (c_{1,k},\ldots,c_{L,k})$ are parameters for optimal control.
Transmission is done by setting the amplitude of an
underlying unit-energy waveform to $Y_{l,k}$. All sensors use the
same underlying waveform. The motivations for the analog
amplify-and-forward transmissions in (\ref{eqn:affine}) are given in
Section \ref{sec:intro}: conditional independence of the
observations given the state, and the Gaussian observation noise. If
the latter does not hold, affine functions of LLRs instead of the
direct observations could be sent (\cite[Ch. 5]{200706The_Zac}).

(5) The GMAC output at the fusion center when projected onto the common waveform yields
\[
\widetilde{Y}_k = \sum_{l=1}^L h_lY_{l,k} + Z_{\textsf{MAC},k},
\]
where $Z_{\textsf{MAC},k} \sim \mathcal{N}(0,
\sigma_{\textsf{MAC}}^2)$ is independent and identically distributed
(iid) across $k$, and is independent of all other quantities. The
gain $h_l \in \mathbb{R}_+$ is the channel gain for the $l$th sensor
and is deterministic. See Fig. \ref{fig:setup}(c). We assume perfect knowledge of the channel
gains is available at the sensors and the fusion center. While this
is not the case in practice, channel knowledge can be gleaned in
time-division duplex (TDD) systems that possess channel reciprocity
(IEEE 802.15.4). See Mudumbai, Barriac \& Madhow
\cite{200705TWC_MudBarMad} for a suggested master-slave
architecture.
In a subsequent section, we study the effect of
imperfect knowledge of these gains.

(6) At the fusion center, form $\widehat{Y}_k$ as follows:
\begin{eqnarray}
\widehat{Y}_k  & = &  \frac{1}{\sum_{l=1}^Lh_l
\alpha_{l,k}}\left(\widetilde{Y}_k+
\sum_{l=1}^Lh_l\alpha_{l,k} c_{l,k}\right) \nonumber \\
\label{eqn:suffstat}& = & \theta_k + \widehat{Z}_{\textsf{MAC},k},
\end{eqnarray}
where $\widehat{Z}_{\textsf{MAC},k} \sim \mathcal{N} (0,\sigma_k^2 )$
and
\begin{equation}
\label{eqn:variance}\sigma_k^2 = \frac{\sum_{l=1}^L
(\sigma_{\textsf{obs},l}h_l\alpha_{l,k})^2+\sigma_{\textsf{MAC}}^2}
{\left(\sum_{l=1}^Lh_l\alpha_{l,k}\right)^2}.
\end{equation}
The quantity $\widehat{Y}_k$ in (\ref{eqn:suffstat}) is obtained
from $\widetilde{Y}_k$ using a bijective mapping; so no information
is lost. From (\ref{eqn:suffstat}), we also see that the distributed multi-sensor setting is
equivalent to a centralized setting where the fusion center makes a
direct (noisy) observation on $\theta_k$ with equivalent additive
observation noise of variance $\sigma_k^2$ as given in
(\ref{eqn:variance}). This is enabled by the affine nature of
$\phi_{l,k}$. The centralized problem with constant $\sigma_k^2$
was studied by Shiryayev \cite{Shi-OSR78} with
the aim of characterizing the stopping rule. The new aspect here is
the dependence of $\sigma^2_k$ on the control parameters.

(7) The fusion center chooses an action $a_{k-1} \in \mathbb{A}$ at time
$k-1$ from set $\mathbb{A}$ of actions (controls)
\[
  \mathbb{A} = \{ stop \} \cup \{ (continue, \alpha,
  c) : \alpha \in \mathbb{R}_+^L, c \in \mathbb{R}^L \}.
\]
If $a_{k-1} = stop$, the fusion center stops. If $a_{k-1} =
(continue, \alpha_k, c_k)$, the fusion center takes another sample
(the $k$th), and all sensors transmit $\phi_{l,k}(X_{l,k})$ with
parameters $(\alpha_k, c_k)$.

(8) As done by Veeravalli in \cite{199303TIT_VeeBasPoo}, we assume a
quasi-classical information structure, i.e., action $a_{k-1}$
depends on
\begin{equation}
\label{eqn:informationspace}
i_{k-1} = \left\{ a_0, \widehat{y}_1, a_1, \widehat{y}_2, \ldots,
a_{k-2}, \widehat{y}_{k-1} \right\}.
\end{equation}
Even though the sensors may have local memory of past observations,
our framework does not make use of this additional
information.\footnote{Veeravalli \cite[p.434]{199303TIT_VeeBasPoo}
discusses other information structures and why they may be difficult
to analyze.} The fusion center feeds back the action parameters
$a_{k-1}$ to the sensors. (We use the following notation: the
quantity $i_{k-1}$ in (\ref{eqn:informationspace}) is a realization
of the random variable $I_{k-1}$ and takes values in the set
$\mathbb{I}_{k-1}$. We set $\mathbb{I}_0 = \emptyset$).

(9) Average power constraint at sensor $l$ is
\[
\mathbb{E}\left[\alpha_{l,k}^2\left(X_{l,k}-c_{l,k}\right)^2|I_{k-1}\right]
 \leq  P_l,
\]
i.e.,
\begin{equation}
 \label{eqn:powerconstraint}\alpha_{l,k}^2\left[\sigma_{\textsf{obs},l}^2
+ \mathbb{E}\left[\left(\theta_k - c_{l,k}\right)^2 | I_{k-1}\right]
\right]  \leq  P_l, ~ l = 1,\ldots,L.
\end{equation}
The set of feasible controls, given $I_{k-1} = i_{k-1}$, is denoted by
\begin{eqnarray}
\lefteqn{ \mathbb{A}(i_{k-1}) = } \nonumber\\
\label{eqn:feasibleControls}
&  \{ stop \} \cup \{ (continue, \alpha, c) :
(\alpha, c, i_{k-1}) \mbox{ satisfies  (\ref{eqn:powerconstraint})} \}. &
\end{eqnarray}
In Section \ref{sec:energy}, we relax the constraint in (\ref{eqn:powerconstraint}) and impose an
expected total energy constraint.

(10) The fusion center policy $\pi$ is a sequence of proposed
(deterministic) actions $\pi = (\pi_{k-1}, k \geq 1)$, where
$\pi_{k-1}$ is a function $\pi_{k-1} : \mathbb{I}_{k-1} \rightarrow
\mathbb{A}$. In particular, $\pi_{k-1}(i_{k-1}) = a_{k-1} \in
\mathbb{A}(i_{k-1})$. Each policy $\pi$ induces a probability
measure. All expectations are with respect to this measure. The
dependence of the expectation operation on $\pi$ is understood and
suppressed.

(11) $\tau$ is the first instant when the fusion center decides to stop.

The problem we wish to solve is the following:

\vspace*{.1in}

\begin{problem} \label{prob:cdetdelay} \textit{(Change detection with delay
penalty)} Min\-imize over all admissible policies the expected
detection delay,
$
E_{\textsf{DD}} = \mathbb{E}\left[\left(\tau -
\Gamma\right)^+\right],
$
subject to an upper bound on the
probability of false alarm $P_{\textsf{FA}} \leq \delta$, where $x^+
 = \max(0,x)$, and $P_{\textsf{FA}} = \Pr \{ \tau < \Gamma \}
$.
\end{problem}

\vspace*{.1in}

The solution to Problem \ref{prob:cdetdelay} is obtained via a
solution to Problem \ref{prob:cdetbayes} (below) for a particular $\lambda >
0$ (Shiryayev \cite{Shi-OSR78}). The quantity $\lambda$ may be
interpreted as the cost of unit delay.

\vspace*{.1in}

\begin{problem} \label{prob:cdetbayes} \textit{(Change detection with a Bayes
cost)} Min\-imize over all admissible policies
\begin{eqnarray}
R(\lambda) & = & P_{\textsf{FA}}+\lambda
E_{\textsf{DD}}
 =  \Pr\{\Gamma > \tau\}+\lambda \mathbb{E}\left[\left(\tau-\Gamma\right)
^+\right] \nonumber \\
 \label{eqn:bayescost}
 & = &
 \mathbb{E} \left[ 1\{ \theta_{\tau} = m_0 \} + \sum_{k=0}^{\tau-1} \lambda 1\{ \theta_k = m_1 \} \right]
\end{eqnarray}
where $\lambda > 0$ and $\mathbb{E}$ is under the probability measure induced by the chosen policy.
\end{problem}

\vspace*{.1in}

The cost function is additive over time. The first term within the expectation in (\ref{eqn:bayescost}) is the terminal cost; the terms in the summation a running cost. At each stage the state $\theta_k$ evolves in a Markov fashion. The controller sees only a noisy version $\widehat{Y}_k$ of the state, but can control the observation noise variance $\sigma_k^2$ via $\alpha$ and $c$. It can also stop at any stage and pay a terminal cost. Any decision affects the future evolution of the cost process. Such problems are Markov decision problems (MDP) with partial observations. They can be analyzed by studying an equivalent complete observation MDP\footnote{See Shiryayev \cite{Shi-OSR78}, Veeravalli \cite{200105TIT_Vee} for results with stopping, Bertsekas \& Shreve \cite[Ch. 10]{1978SOC_BerShr} for discounted costs, and Bertsekas \cite[Ch. V]{Ber-DPOC95}.} with a reduced (posterior) state $
  \mu_k \stackrel{\Delta}{=} \mathbb{E} \left[ 1\{\theta_k = m_1\}
  \mid I_{k} \right] = \Pr \{ \Gamma \leq k \mid I_{k} \}.
$ The probability law for $\{ \mu_k : k \geq 0\}$ is given as
follows: $\mu_0 = \Pr \{ \Gamma \leq 0 \mid I_0 \} = \nu$, and the
law for $\mu_k$, under $a_k = (continue, \alpha_{k+1}, c_{k+1})$, is
(see Veeravalli \cite[eqn. (9)]{200105TIT_Vee})
\begin{eqnarray}
   \mu_{k+1}
      & = & \frac{\beta_kf_{m_1,\alpha_{k+1}} \left( \hat{Y}_
  {k+1} \right)}{\beta_kf_{m_1,\alpha_{k+1}} \left(\hat{Y}_{k+1}\right)+(1-\beta_k)
  f_{m_0,\alpha_{k+1}} \left(\hat{Y}_{k+1} \right)} \nonumber \\
  \label{eqn:recursion}  & \stackrel{\triangle}{=} & \frac{g \left( \hat{Y}_{k+1},
  \alpha_{k+1},\mu_k \right)}{h \left(\hat{Y}_{k+1},\alpha_{k+1},\mu_k \right)}
  \stackrel{\triangle}{=} \psi \left( \hat{Y}_{k+1}, \mu_k, \alpha_{k+1} \right),
\end{eqnarray}
where $ \beta_k \stackrel{\triangle}{=}\Pr\{\Gamma \leq k+1|I_k\} =
\mu_k+(1-\mu_k)p$, and $f_{m_i,\alpha_{k+1}}$ is the density of an
$\mathcal{N}(m_i, \sigma_{k+1}^2)$ random variable. The quantities
$h$ and $g$ are as in (\ref{eqn:recursion}); $h$ is the
density of $\hat{Y}_{k+1}$ given $(I_k, a_k)$, and $g$ is a scaled
density. The power constraint (\ref{eqn:powerconstraint}) when
written for time $k+1$ simplifies to
\begin{eqnarray}
  \alpha_{l,k+1}^2  \left[ \sigma^2_{\textsf{obs},l} + (m_0  -  c_{l,k+1})^2 (1 - \beta_k)\right.\nonumber\\
  \left. + (m_1 - c_{l,k+1})^2 \beta_k \right]
   \leq  P_l.
  \label{eqn:powerconstraint1}
\end{eqnarray}
The set of feasible controls in (\ref{eqn:feasibleControls}) depends
on $i_k$ only through $\mu_k$ and can be simplified to
\begin{eqnarray*}
  \mathbb{A}(\mu)  =  \{ stop \}  \cup  \{ (continue,  \alpha,  c)
  : \nonumber\\
  (\alpha,
  c,  \mu) \mbox{ satisfies } (\ref{eqn:powerconstraint1}) \},
\end{eqnarray*}
where $\mathbb{A}(\cdot)$ is re-used to denote the set of
feasible controls for the equivalent complete observation MDP.
Let
$
  \mathbb{A}'(\mu) = \{ (\alpha, c) : (continue, \alpha, c) \in \mathbb{A}(\mu) \}
$ denote the set of control  parameters when the action is to
continue. Now consider the objective function. Taking conditional
expectations with respect to the information process, (see Shiryayev
\cite[pp.195--196]{Shi-OSR78}), (\ref{eqn:bayescost}) reduces to
\begin{equation}
  \label{eqn:bayescost1}
  R(\lambda) = \mathbb{E} \left[ (1 - \mu_{\tau}) + \sum_{k=0}^{\tau-1} \lambda \mu_k \right].
\end{equation}
Minimization of (\ref{eqn:bayescost1}) is done via dynamic programming. Some additional remarks are in order.

{\it Remarks}: 1. The variance $\sigma_{k+1}^2$ depends on
$\alpha_{k+1}$ as shown in (\ref{eqn:variance}), and hence the
dependence on $\alpha_{k+1}$ in (\ref{eqn:recursion}). $\mu_{k+1}$
depends on $c_{k+1}$ only through $\alpha_{k+1}$ because of the
processing done in (\ref{eqn:suffstat}).

2. If the running cost is $\lambda$ instead of $\lambda 1\{ \theta_k
= m_1 \}$ in (\ref{eqn:bayescost}), every sample costs $\lambda$ units, not just those beyond
the change point that contribute to the delay. This is a minor
variation to Problem \ref{prob:cdetbayes} and has a similar solution.

3. Another variation is sequential hypothesis testing: set the transition probability
$p=0$, enhance the action $stop$ to $(stop, \hat{\theta})$, where
$\hat{\theta}$ is the decision (either $m_0$ or $m_1$), and set the
terminal cost to $1\{ \theta_{\tau} \neq \hat{\theta} \}$. The
running cost is a constant $\lambda$ for every sample.

\subsection{Optimal Policy}
\label{sec:policy}
As is usual with such problems, we first restrict the stopping time
$\tau$ to a finite horizon $T$. Using Bertsekas's result \cite[Ch.1,
Prop.3.1]{Ber-DPOC95}, the cost-to-go function recursions are
written as
\begin{eqnarray*}
J_T^T(\mu_T) & = & 1-\mu_T,\\
J_k^T(\mu_k) & = & \min\left\{1-\mu_k,\lambda \mu_k+A_k^T(\mu_k)\right\}, ~0 \leq k < T, \\
A_k^T(\mu) & = & \min_{(\alpha,c) \in \mathbb{A}'(\mu)} \mathbb{E}
\left[ J_{k+1}^T \left( \psi \left( \hat{Y}, \mu, \alpha \right)
\right) \right] \\
   &=&  \min_{(\alpha, c) \in \mathbb{A}'(\mu)} \! \int_{\mathbb{R}}
   \!
   {J_{k+1}^T\left(\frac {g\left(\hat{y},\alpha,\mu\right)}
   {h\left(\hat{y},\alpha,\mu\right)} \right) \! h\left(\hat{y},\alpha,\mu\right)d\hat{y}}.
\end{eqnarray*}

To solve Problem  \ref{prob:cdetbayes}, let $T\rightarrow
\infty$. From results in \cite{199303TIT_VeeBasPoo} and
\cite{200105TIT_Vee}, the limit in (\ref{eqn:bellman}) below exists, does not depend on $k$ (i.e., the policy is stationary),
and defines the infinite horizon cost-to-go function:
\begin{eqnarray}
J(\mu) = \lim_{T\rightarrow \infty}J_k^T(\mu)
\label{eqn:bellman} = \min \left\{1-\mu,\lambda\mu+A_J(\mu)\right\},
\end{eqnarray}
where
\begin{eqnarray}
\label{eqn:meancost} A_J(\mu) & = &
\min_{(\alpha,c) \in \mathbb{A}'(\mu)} \mathbb{E} \left[ J \left(
\psi \left( \hat{Y}, \mu, \alpha \right) \right) \right] .
\end{eqnarray}
The following lemma enables a characterization of the optimal
stopping policy.

\vspace*{.1in}

\begin{lemma}
\label{lemma:concavity} The functions $J_k^T(\mu)$ and $A_k^T(\mu)$
are non-neg\-ative and concave functions of $\mu$, for $\mu \in
[0,1]$. Moreover, $A_k^T(1) = J_k^T(1) = 0$. Similarly, the
functions $J(\mu)$ and $A_J(\mu)$ are non-negative and concave
functions of $\mu$, for $\mu \in [0,1]$, and $A_J(1) = J(1) = 0$.
\end{lemma}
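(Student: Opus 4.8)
The plan is to prove the finite-horizon assertions by backward induction on $k$ and then recover the infinite-horizon ones by letting $T\to\infty$. The base case is immediate: $J_T^T(\mu)=1-\mu$ is affine (hence concave), non-negative on $[0,1]$, and $J_T^T(1)=0$. For the induction, assume $J_{k+1}^T$ is non-negative and concave on $[0,1]$ with $J_{k+1}^T(1)=0$ (and note $0\le J_{k+1}^T\le 1$ automatically, since the recursion forces $J_{k+1}^T\le 1-\mu$); I must deduce the same for $A_k^T$ and $J_k^T$. As a sanity check on the first step, since $\mathbb{E}[\psi(\hat Y,\mu,\alpha)]=\mathbb{E}[\mu_{k+1}\mid I_k]=\beta_k=\mu+(1-\mu)p$ is affine in $\mu$ and does not depend on the control, $A_{T-1}^T(\mu)=1-\beta_k=(1-p)(1-\mu)$, which already has the asserted form.

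The crux is a fixed-control statement: for every feasible pair $(\alpha,c)$, the map $\mu\mapsto \int_{\mathbb{R}} J_{k+1}^T\!\big(g(\hat y,\alpha,\mu)/h(\hat y,\alpha,\mu)\big)\,h(\hat y,\alpha,\mu)\,d\hat y=\mathbb{E}[J_{k+1}^T(\psi(\hat Y,\mu,\alpha))]$ is concave and non-negative on $[0,1]$ and vanishes at $\mu=1$. Non-negativity is inherited from $J_{k+1}^T\ge 0$. For concavity, note that $\beta_k=\mu+(1-\mu)p$ is affine in $\mu$, so for each fixed $\hat y$ (and $\alpha$) the quantities $g(\hat y,\alpha,\mu)=\beta_k f_{m_1,\alpha}(\hat y)$ and $h(\hat y,\alpha,\mu)=\beta_k f_{m_1,\alpha}(\hat y)+(1-\beta_k)f_{m_0,\alpha}(\hat y)$ are affine in $\mu$; since $J_{k+1}^T$ is concave, its perspective transform $(x,t)\mapsto t\,J_{k+1}^T(x/t)$ is jointly concave on $\{t>0\}$, and composing it with the affine map $\mu\mapsto(g(\hat y,\alpha,\mu),h(\hat y,\alpha,\mu))$ yields a concave function of $\mu$; as $J_{k+1}^T$ is bounded and $h(\cdot,\alpha,\mu)$ is a probability density, integration over $\hat y$ is legitimate and preserves concavity. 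At $\mu=1$ we have $\beta_k=1$, so $h(\cdot,\alpha,1)=g(\cdot,\alpha,1)=f_{m_1,\alpha}$ and $\psi(\hat Y,1,\alpha)\equiv 1$; hence the expectation equals $J_{k+1}^T(1)=0$ (the feasible set at $\mu=1$ is non-empty, e.g. $c_l=m_1$ with $\alpha_l$ small enough in (\ref{eqn:powerconstraint1})).

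It remains to pass from the fixed-control statement to $A_k^T$ and $J_k^T$. Now $A_k^T(\mu)$ is the infimum over $(\alpha,c)\in\mathbb{A}'(\mu)$ of the concave functions just produced, and the delicate point -- the step I expect to be the main obstacle -- is that the feasible set $\mathbb{A}'(\mu)$ itself varies with $\mu$ through $\beta_k$ in (\ref{eqn:powerconstraint1}) (the budget is tightest when $\beta_k\approx 1/2$), so ``an infimum of concave functions is concave'' does not apply verbatim. The route I would take is to exploit the structure in Remark 1 and in (\ref{eqn:suffstat})--(\ref{eqn:variance}): the objective depends on the control only through the equivalent noise variance $\sigma_{k+1}^2$, and -- because $J_{k+1}^T$ is concave -- a less noisy equivalent observation is never worse (the posterior produced by a noisier Gaussian channel is a conditional mean of posteriors produced by a cleaner one, so Jensen's inequality applies). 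Consequently the infimum defining $A_k^T(\mu)$ is attained at the smallest feasible value of $\sigma_{k+1}^2$, and $A_k^T$ collapses to a single composite function of $\mu$; establishing concavity of that function is where the real work lies. Granting this, $A_k^T$ is non-negative, $A_k^T(1)=0$ by the previous paragraph, and $J_k^T(\mu)=\min\{1-\mu,\ \lambda\mu+A_k^T(\mu)\}$ is the pointwise minimum of an affine function and a concave function, hence concave; it is non-negative as the minimum of two functions non-negative on $[0,1]$, and $J_k^T(1)=\min\{0,\lambda+0\}=0$.

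For the infinite-horizon statements, the cited results ensure that $J(\mu)=\lim_{T\to\infty}J_k^T(\mu)$ exists pointwise on $[0,1]$ and is independent of $k$, and correspondingly $A_J(\mu)=\lim_{T\to\infty}A_k^T(\mu)$; concavity and non-negativity survive pointwise limits, so $J$ and $A_J$ are concave and non-negative, and $J(1)=A_J(1)=\lim_T 0=0$. Alternatively, once $J$ is known to be non-negative, concave and to vanish at $\mu=1$, the arguments of the two preceding paragraphs, applied verbatim with $J$ in place of $J_{k+1}^T$, show directly that $A_J$ in (\ref{eqn:meancost}) and $J$ in (\ref{eqn:bellman}) have the stated properties. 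The only genuinely nontrivial ingredient is the concavity of the variance-optimized, belief-dependent continuation value noted above; the remaining steps are the perspective-function computation, Jensen's inequality for the ``noisier is worse'' monotonicity, and the elementary stability of concavity and non-negativity under minima with affine functions and under pointwise limits.
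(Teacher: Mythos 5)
Your finite-horizon induction reproduces, in its completed parts, exactly the standard argument the paper itself invokes (its entire proof is a citation of Bertsekas's sequential-testing lemma): the base case $J_T^T(\mu)=1-\mu$, the perspective-function computation showing that for each \emph{fixed} $(\alpha,c)$ the map $\mu\mapsto\int J_{k+1}^T\bigl(g(\hat y,\alpha,\mu)/h(\hat y,\alpha,\mu)\bigr)h(\hat y,\alpha,\mu)\,d\hat y$ is concave because $g$ and $h$ are affine in $\mu$, the evaluation at $\mu=1$ via $\beta=1$, the pointwise minimum with the affine function $1-\mu$, and the passage to the limit $T\to\infty$. In the cited argument the only remaining ingredient is that a pointwise infimum of concave functions is concave, the control set being independent of the belief; the paper tacitly treats the minimization over $(\alpha,c)$ the same way.

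The problem is that the one assertion that is the heart of the lemma --- concavity of $A_k^T$ (hence of $A_J$) --- is precisely the step you do not prove: you write ``Granting this'' at the point where the feasible set $\mathbb{A}'(\mu)$ varies with $\mu$ through $\beta(1-\beta)$ in (\ref{eqn:powerconstraint1}). You are right that ``infimum of concave functions'' does not apply verbatim there, but the route you sketch does not close the gap. Reducing the inner minimization to the smallest feasible equivalent variance (the degradation/Ali-Silvey monotonicity, i.e.\ Step~1 of the proof of Theorem~\ref{thm:controls} together with Theorem~\ref{thm:AliSilvey}) gives that the objective is nondecreasing in $\sigma_{k+1}^2$, so $A_k^T(\mu)=W\bigl(\mu,\sigma^{2*}(\mu)\bigr)$ with $W(\cdot,s)$ concave for fixed $s$ and $W(\mu,\cdot)$ nondecreasing. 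But to deduce concavity along $\mu_\lambda=\lambda\mu_1+(1-\lambda)\mu_2$ by chaining fixed-variance concavity with this monotonicity you would need $\sigma^{2*}(\mu_\lambda)\geq\max\{\sigma^{2*}(\mu_1),\sigma^{2*}(\mu_2)\}$; since $\sigma^{2*}$ depends on $\mu$ only through $\beta(1-\beta)$, which is concave in $\mu$ but monotone on any segment with both $\beta$'s on one side of $1/2$, that inequality fails in general, so Jensen plus ``noisier is worse'' do not combine into the required bound. In short, the ``only genuinely nontrivial ingredient'' you name is exactly what remains unproven; as written, your argument establishes the lemma only in the case of a belief-independent constraint set, which is also all that the paper's one-line appeal to Bertsekas (whose setting has no belief-dependent control constraint) actually covers.
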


\vspace*{.1in}

The proof is the same as that in Bertsekas \cite[p. 268]{Ber-DPOC95}
for sequential hypothesis testing. The concavity of $A_J(\mu)$ and
(\ref{eqn:bellman}) imply  the following theorem (Shiryayev \cite{Shi-OSR78}, Veeravalli
\cite{200105TIT_Vee}).

\vspace*{.1in}

\begin{thm}
\label{thm:stopping}
An optimal fusion center policy has stopping time $\tau$ given by
$
\tau = \inf \{k: \mu_k \geq \mu^*\},
$
where $\mu^*$ is the unique solution to
$
\lambda \mu +A_J(\mu) = 1-\mu.
$
\end{thm}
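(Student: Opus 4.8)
The plan is to read off the optimal stopping region directly from the Bellman equation (\ref{eqn:bellman}) and then use the concavity supplied by Lemma \ref{lemma:concavity} to show that this region is an interval of the form $[\mu^*,1]$. Since the limit in (\ref{eqn:bellman}) exists and $\{\mu_k\}$ is a sufficient statistic for the equivalent complete‑observation MDP, the standard optimal‑stopping theory for partially observed Markov decision problems (Shiryayev \cite{Shi-OSR78}, Veeravalli \cite{200105TIT_Vee}) guarantees that an optimal policy stops at a state $\mu$ precisely when the terminal cost does not exceed the continuation cost appearing in (\ref{eqn:bellman}), i.e.
\[
  \tau = \inf\{ k : \mu_k \in \mathcal{S} \}, \qquad
  \mathcal{S} = \left\{ \mu \in [0,1] : 1-\mu \le \lambda\mu + A_J(\mu) \right\},
\]
ties being resolved in favour of stopping. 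It therefore suffices to prove that $\mathcal{S}=[\mu^*,1]$ with $\mu^*$ as in the statement.

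Introduce $d(\mu) = \lambda\mu + A_J(\mu) - (1-\mu) = (1+\lambda)\mu - 1 + A_J(\mu)$, so that $\mathcal{S}=\{\mu:d(\mu)\ge 0\}$. By Lemma \ref{lemma:concavity}, $A_J$ is concave and non‑negative on $[0,1]$ with $A_J(1)=0$; hence $d$, a sum of an affine function and a concave one, is concave on $[0,1]$, and $d(1)=\lambda+A_J(1)=\lambda>0$. In the non‑degenerate situation $A_J(0)<1$ one has $d(0)=A_J(0)-1<0$; if instead $A_J(0)\ge 1$, concavity forces $d\ge 0$ on all of $[0,1]$, so $\mathcal{S}=[0,1]$ and the theorem holds trivially with $\mu^*=0$. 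Assume henceforth $d(0)<0<d(1)$.

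Using continuity of $A_J$ (hence of $d$) on $[0,1]$, the intermediate value theorem produces a point $\mu^*$ with $d(\mu^*)=0$, i.e. $\lambda\mu^*+A_J(\mu^*)=1-\mu^*$. Uniqueness follows from concavity together with $d(1)>0$: if $d$ vanished at two points $\mu_1<\mu_2$ — necessarily both in $[0,1)$, since $d(1)=\lambda\ne 0$ — then writing $\mu_2$ as a convex combination of $\mu_1$ and $1$, concavity gives
\[
  d(\mu_2) \ \ge\ \tfrac{1-\mu_2}{1-\mu_1}\, d(\mu_1) + \tfrac{\mu_2-\mu_1}{1-\mu_1}\, d(1)
  \ =\ \tfrac{\mu_2-\mu_1}{1-\mu_1}\,\lambda \ >\ 0,
\]
contradicting $d(\mu_2)=0$. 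The same chord inequality with $\mu_1=\mu^*$ shows $d>0$ on $(\mu^*,1]$; and $d<0$ on $[0,\mu^*)$, since any $\mu\in[0,\mu^*)$ with $d(\mu)\ge 0$ would, via $d(0)<0$ and the intermediate value theorem, furnish a zero of $d$ in $[0,\mu^*)$, violating uniqueness. Hence $\mathcal{S}=\{\mu:d(\mu)\ge 0\}=[\mu^*,1]$, and therefore $\tau=\inf\{k:\mu_k\ge\mu^*\}$, which is the assertion.

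The genuinely non‑routine parts are the two I have taken as given. The first is that the dynamic‑programming solution really does reduce to the threshold rule $\tau=\inf\{k:\mu_k\in\mathcal{S}\}$; this rests on the existence of the infinite‑horizon value function in (\ref{eqn:bellman}) and on the concavity of Lemma \ref{lemma:concavity}, and is where the optimal‑stopping theory for partially observed MDPs (the cited works) is actually invoked. The second is the continuity of $A_J$ up to the endpoints of $[0,1]$ together with the explicit handling of the degenerate case $A_J(0)\ge 1$ (and, relatedly, checking that the continuation region is non‑empty so that $\mu^*>0$ in the interesting regime). Everything else — the concavity of $d$, the endpoint evaluations via $A_J(1)=0$, and the chord argument establishing existence and uniqueness of $\mu^*$ — is elementary.
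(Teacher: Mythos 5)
Your proposal is correct and follows essentially the same route as the paper, which simply invokes the Bellman equation (\ref{eqn:bellman}), the concavity and boundary values of $A_J$ from Lemma \ref{lemma:concavity}, and the standard threshold argument of Shiryayev and Veeravalli. You merely spell out the elementary concavity/intermediate-value details (and the degenerate case $A_J(0)\ge 1$, which cannot occur here since $A_J(0)\le 1-p<1$) that the paper leaves to the cited references.
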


\vspace*{.1in}

To summarize, the optimal detection strategy at time $k$ is as
follows. Convert the received signal $\tilde{Y}_{k}$ into the
posterior probability of change $\mu_{k}$ using (\ref{eqn:suffstat})
and (\ref{eqn:recursion}). If $\mu_{k}$ exceeds a threshold, declare
that a change has occurred. Otherwise, make the sensors transmit
another sample using parameters $\alpha,c$ chosen optimally as
described in the next subsection.

\subsection{Parameters for Optimal Control}
\label{sec:controls}

We begin this section with an algorithm that calculates the optimal $\alpha$.

\vspace*{.1in}

\begin{algorithm}
\label{alg:controls} Let
\[ \sigma_{\textsf{obs},1}^2h_1\alpha_{\textsf{max},1} \leq \cdots
\leq \sigma_{\textsf{obs},L}^2h_L\alpha_{\textsf{max},L},\]
where the quantity
\[ \alpha_{\textsf{max},l} = \left(P_l/ \left(\sigma_{\textsf{obs},l}^2+(m_1-m_0)^2 \beta(1-\beta)\right)\right)^{1/2} \]
with $\beta = \mu+(1-\mu)p$.
\begin{itemize}
\item
\textbf{Step 1}: Find the unique $k \in \{1,\ldots,L-1\}$ that satisfies
\begin{eqnarray}
\sigma_{\textsf{obs},k}^2h_k\alpha_{\textsf{max},k} & \leq &
\frac{\sum_{l=1}^k(\sigma_{\textsf{obs},l}h_l\alpha_{\textsf{max},l})^2+\sigma_{\textsf{MAC}}^2}
{\sum_{l=1}^kh_l\alpha_{\textsf{max},l}} \nonumber \\
 \label{eqn:condition} & \leq &
\sigma_{\textsf{obs},k+1}^2h_{k+1}\alpha_{\textsf{max},k+1}
\end{eqnarray}
if it exists. Otherwise, set $k = L$.
\item
\textbf{Step 2}: Set the optimal $\alpha$ as follows.
\begin{eqnarray}
\lefteqn{ a^*  =  \sum_{l=1}^k h_l\alpha_{\textsf{max},l} +
\frac{\sum_{l=k+1}^L \sigma_{\textsf{obs},l}^{-2}}{\sum_{l=1}^k h_l \alpha_{\max, l}} } \nonumber \\
\label{eqn:astar} && ~\cdot \left( \sum_{l=1}^k(\sigma_{\textsf{obs},l}h_l\alpha_{\textsf{max},l})^2 + \sigma_{\textsf{MAC}}^2 \right),
\end{eqnarray}
\begin{eqnarray}
\alpha_m & = & \alpha_{\textsf{max},m}, ~ 1 \leq m \leq k, \nonumber \\
\alpha_m & = & \frac{1}{\sigma_{\textsf{obs},m}^2 h_m } \cdot
\frac{a^*-\sum_{l=1}^kh_l\alpha_{\textsf{max},l}}
{\sum_{l=k+1}^L \sigma_{\textsf{obs},l}^{-2}},
~ k < m. \nonumber \\
&&\label{eqn:alpha}
\end{eqnarray}
\end{itemize}
\end{algorithm}

\vspace*{.1in}

The optimal choice sets amplitudes of the $k$ sensors with the
$k$ least scaled observation noise variance ($\sigma_{\textsf{obs},l}^2 h_l \alpha_{\max,l}$) to $\alpha_{\textsf{max},l}$. The remaining
sensors' amplitudes are appropriately chosen smaller values.
Intuitively, sensors $l = k+1, \ldots, L$ have so good a channel
that scaling by $\alpha_{\textsf{max},l}$ for these sensors will
amplify the observation noise leading to a larger overall noise
variance. Note that when all channel gains, observation variances,
and power constraints are equal,
$\alpha_l = \alpha_ {\textsf{max}}$
for all sensors. This special case was earlier proved in
\cite{200706ISIT_ZacSun}.

\vspace*{.1in}

\begin{thm}\label{thm:controls}
The choice of
$
c_l =  m_1\beta+m_0(1-\beta), ~l = 1,\ldots,L,
$
and $\alpha$ according to Algorithm \ref{alg:controls} constitute the
optimal controls that minimize (\ref{eqn:meancost}).
\end{thm}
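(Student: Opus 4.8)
The plan is to reduce the optimization of \eqref{eqn:meancost} over $(\alpha,c)\in\mathbb{A}'(\mu)$ to a clean deterministic optimization in $\alpha$ alone, and then to solve that optimization explicitly via a KKT / Lagrangian analysis. The first step is to invoke the Ali-Silvey characterization promised in Appendix \ref{app:AliSilvey}: the objective $\mathbb{E}\left[ J\left(\psi(\hat Y,\mu,\alpha)\right)\right]$ in \eqref{eqn:meancost} is, for each fixed $\mu$, a monotone functional of a certain Ali-Silvey distance between the $\mathcal{N}(m_0,\sigma_{k+1}^2)$ and $\mathcal{N}(m_1,\sigma_{k+1}^2)$ densities at the fusion center, and that distance depends on the controls only through $\sigma_{k+1}^2$. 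Since $J$ is concave with $J(1)=0$ (Lemma \ref{lemma:concavity}), this Ali-Silvey functional is decreasing in $\sigma_{k+1}^2$ — intuitively, a less noisy equivalent observation can only help detection — so minimizing \eqref{eqn:meancost} is equivalent to \emph{minimizing the equivalent noise variance} $\sigma_{k+1}^2$ of \eqref{eqn:variance} over the feasible set. This is where the $c$-part of the theorem falls out: by Remark 1, $\sigma_{k+1}^2$ does not depend on $c$ at all, so $c$ is free to be used solely to enlarge the feasible range of $\alpha$. Inspecting the power constraint \eqref{eqn:powerconstraint1}, the coefficient of $\alpha_{l,k+1}^2$ is $\sigma_{\textsf{obs},l}^2+(m_0-c_{l,k+1})^2(1-\beta)+(m_1-c_{l,k+1})^2\beta$, a strictly convex quadratic in $c_{l,k+1}$ minimized at $c_{l,k+1}=m_1\beta+m_0(1-\beta)$, at which point the bracket equals $\sigma_{\textsf{obs},l}^2+(m_1-m_0)^2\beta(1-\beta)$; this is precisely the choice that makes the admissible upper bound $\alpha_{\textsf{max},l}$ as large as possible. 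Hence the stated $c_l$ is optimal, and the problem collapses to: minimize $\sigma_{k+1}^2$ in \eqref{eqn:variance} subject to $0\le\alpha_l\le\alpha_{\textsf{max},l}$ for each $l$.

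Next I would solve this reduced problem. Write $u_l = h_l\alpha_l\ge 0$ and $s_l=\sigma_{\textsf{obs},l}^2$, so that we are minimizing $f(u)=\bigl(\sum_l s_l^2 u_l^2 + \sigma_{\textsf{MAC}}^2\bigr)/\bigl(\sum_l u_l\bigr)^2$ over the box $0\le u_l\le h_l\alpha_{\textsf{max},l}$. Although $f$ is not convex, it is scale-invariant in a partial sense and amenable to a Lagrange/KKT analysis: set the partial derivative of $f$ with respect to $u_m$ (for an interior coordinate) to zero. A short computation shows $\partial f/\partial u_m = 0$ forces $s_m^2 u_m\bigl(\sum_l u_l\bigr) = \sum_l s_l^2 u_l^2+\sigma_{\textsf{MAC}}^2$, i.e. $s_m^2 u_m$ equals the common value $V:=\bigl(\sum_l s_l^2 u_l^2+\sigma_{\textsf{MAC}}^2\bigr)/\sum_l u_l$ for every coordinate $m$ not at its box boundary. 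Combined with the complementary-slackness conditions, this says: coordinates with $s_m^2(h_m\alpha_{\textsf{max},m}) \le V$ are pushed to their upper bound $u_m=h_m\alpha_{\textsf{max},m}$ (their "natural" value $V/s_m^2$ exceeds the cap), while the remaining coordinates sit at $u_m=V/s_m^2$. Ordering the sensors so that $s_l^2 h_l\alpha_{\textsf{max},l}=\sigma_{\textsf{obs},l}^2 h_l\alpha_{\textsf{max},l}$ is nondecreasing (as in Algorithm \ref{alg:controls}), the set of saturated sensors is exactly a prefix $\{1,\dots,k\}$, and $k$ is pinned down by the sandwich inequality \eqref{eqn:condition}, which is precisely the statement that $V$, evaluated using the saturated prefix, lands between $\sigma_{\textsf{obs},k}^2 h_k\alpha_{\textsf{max},k}$ and $\sigma_{\textsf{obs},k+1}^2 h_{k+1}\alpha_{\textsf{max},k+1}$. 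Solving the scalar fixed-point equation for $V$ (equivalently for $a^*=\sum_l u_l$) with this prefix saturated yields the closed form \eqref{eqn:astar}, and back-substituting $u_m=V/s_m^2$ for $m>k$ gives \eqref{eqn:alpha}. I would also check the two edge cases handled in Step 1 — no valid $k$ in $\{1,\dots,L-1\}$, in which case all sensors saturate ($k=L$) — and verify the uniqueness claim in Step 1 using the monotonicity of both sides of \eqref{eqn:condition} in $k$.

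Two points will need care. First, establishing that the unconstrained/KKT stationary point is in fact the \emph{global} minimizer over the box, despite nonconvexity: here I would argue either by the special structure (for fixed support and fixed $\sum u_l$, $f$ is a convex quadratic in the remaining free coordinates, and one optimizes over the simplex-like boundary explicitly), or by a direct exchange argument showing that moving mass from a non-saturated "good-channel" sensor onto the collective sum strictly decreases $f$ until the prefix condition holds — this exchange argument also supplies the intuition quoted after the algorithm about over-amplifying low-noise sensors. This verification is the main obstacle; everything else is bookkeeping. Second, I must confirm that the optimal $\alpha$ depends on $i_k$ only through $\mu$ (via $\beta=\mu+(1-\mu)p$), which is immediate once one notes $\alpha_{\textsf{max},l}$ depends on $i_k$ only through $\beta$; this is what makes the optimal control a legitimate stationary policy for the equivalent complete-observation MDP and closes the loop with Theorem \ref{thm:stopping}. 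Finally, since the map \eqref{eqn:suffstat} from $\widetilde Y_k$ to $\widehat Y_k$ is bijective, choosing $(\alpha,c)$ to minimize $\sigma_{k+1}^2$ is genuinely optimal for the original partially observed problem, not merely for a surrogate.
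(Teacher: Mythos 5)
Your overall architecture matches the paper's: reduce (\ref{eqn:meancost}) to minimization of the equivalent variance (\ref{eqn:variance}) via the Ali-Silvey characterization of Theorem \ref{thm:AliSilvey}, pick $c_l$ as the MMSE estimate $m_1\beta+m_0(1-\beta)$ because $\sigma_{k+1}^2$ is $c$-free and this choice maximizes the admissible $\alpha_{\textsf{max},l}$ in (\ref{eqn:powerconstraint1}), and then show the box-constrained variance minimization is solved by Algorithm \ref{alg:controls}. The $c$-step is exactly the paper's Step 2. One point of rigor in Step 1: your claim that the Ali-Silvey functional is decreasing in $\sigma_{k+1}^2$ does not follow from concavity of $J$ alone; the paper proves it by noting that $\widehat{Y}(\alpha')$ with the larger variance is distributed as $\widehat{Y}(\alpha)$ plus an independent $\mathcal{N}\bigl(0,\sigma'^2-\sigma^2\bigr)$ perturbation, i.e.\ a stochastic degradation, and then invoking the data-processing monotonicity of Ali-Silvey distances. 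This is a one-line repair, but as written your ``less noisy can only help'' assertion is exactly the statement needing proof.

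The genuine gap is in your treatment of Lemma \ref{lemma:optimization}. Your KKT computation is correct as far as it goes (stationarity forces $\sigma_{\textsf{obs},m}^2 h_m\alpha_m$ to equal the common value $V$ for unsaturated coordinates, and saturation occurs when $\sigma_{\textsf{obs},m}^2 h_m\alpha_{\textsf{max},m}\le V$, which reproduces (\ref{eqn:condition}), (\ref{eqn:astar}), (\ref{eqn:alpha})), but the objective is nonconvex, so these conditions are only necessary; you acknowledge that global optimality is ``the main obstacle'' and offer two unexecuted strategies. The paper closes precisely this hole in Appendix \ref{app:optimization} by a two-stage decomposition: for a fixed sum $a=\sum_l h_l\alpha_l$ the inner problem is a separable convex program solved in closed form (a waterfilling-type solution via the Padakandla--Sundaresan algorithm), giving a value $V(a)$; the outer problem of minimizing $\bigl(V(a)+\sigma_{\textsf{MAC}}^2\bigr)/a^2$ over $a\in[0,a_{\max}]$ is still nonconvex, and the essential trick is the substitution $b=1/a$, under which the objective becomes a continuous, continuously differentiable, piecewise convex parabola whose derivative is eventually positive, so a stationary point lying in its own interval is a global minimizer; this yields (\ref{eqn:condition}) and, when no interior stationary point exists, the $k=L$ fallback $a^*=a_{\max}$. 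Note that your first suggested remedy (fix the sum and optimize the convex quadratic in the free coordinates) is essentially the paper's inner stage, but it still leaves the nonconvex one-dimensional outer problem, for which the $b=1/a$ convexification is the missing idea; your exchange-argument alternative would also have to cope with this outer nonconvexity. Your closing observations (dependence on $i_k$ only through $\mu$ via $\beta$, and bijectivity of (\ref{eqn:suffstat})) are correct and consistent with the paper.
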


\vspace*{.1in}

\begin{proof} \textbf{Step 1}: We prove that the optimal control minimizes the variance
(\ref{eqn:variance}). Consider $\alpha$ and $\alpha'$ with resulting
variances $\sigma^2 < \sigma'^2$. From the second equality in
(\ref{eqn:suffstat}) we have
\begin{eqnarray}
\label{eqn:randprocessing1}\widehat{Y}(\alpha) = \theta + \sigma Z,
\end{eqnarray}
\begin{eqnarray}
\widehat{Y}(\alpha')  =  \theta + \sigma'Z'
\label{eqn:randprocessing2} ~~ \sim ~~ \theta+\sigma Z_1+(\sigma'^2-\sigma
^2)^{1/2}Z_2,
\end{eqnarray}
where $Z,~Z_1,~Z_2$ are iid $\mathcal{N}(0,1)$ with $Z' = Z_1+Z_2$. The time index $k$ is understood.

From (\ref{eqn:randprocessing1}) and (\ref{eqn:randprocessing2}),
$\widehat{Y}(\alpha')$ is a stochastically degraded
version of $\widehat{Y}(\alpha)$ and is equivalent to an additional
random processing on $\widehat{Y}(\alpha)$.
Theorem \ref{thm:AliSilvey} in Appendix \ref{app:AliSilvey} shows that
\[
1-\mathbb{E}_h\left[J\left(\frac{g\left(\widehat{Y}(\alpha),\alpha,\mu\right)}
{h\left(\widehat{Y}(\alpha),\alpha,\mu\right)}\right)\right]
\]
is an {\em Ali-Silvey distance} between two probability measures. In
$\mathbb{E}_h$ the dependence  of $h$ on $\alpha$ is understood and
suppressed. Ali-Silvey distances have a well-known monotonicity
property: data processing, whether deterministic or random, cannot
increase the dissimilarity measure between two distributions
(\cite{1966xxJRSS_AliSil}, \cite{1967xxSSMH_Csi}). This property
implies that
\[
\mathbb{E}_h \! \left[ \! J \! \left(\!
\frac{g\!\left(\!\widehat{Y}(\alpha),\alpha,\mu\!\right)}
{h\!\left(\!\widehat{Y}(\alpha),\alpha,\mu\!\right)} \! \right) \!
\right] \! \leq \! \mathbb{E}_h \! \left[ \! J \! \left(\!
\frac{g\!\left(\!\widehat{Y}(\alpha'),\alpha',\mu\! \right)}
{h\!\left(\!\widehat{Y}(\alpha'),\alpha',\mu\!\right)} \! \right)\!
\right].
\]

 It follows that minimization of the variance in
(\ref{eqn:variance}) is the criterion for getting the optimal
$\alpha$.

\textbf{Step 2}: We now identify the optimal $c$.
The minimization mentioned in the previous step
should be done subject
to the power constraint given in (\ref{eqn:powerconstraint}), which can be
rewritten as
\begin{equation}
\label{eqn:alphasquarebound} \alpha_{l,k}^2 \leq
P_l \cdot \left[\sigma_{\textsf{obs},l}^2 +
\mathbb{E}\left[\left(\theta_k - c_{l,k}\right)^2 | I_{k-1}\right]
\right]^{-1}.
\end{equation}
The constraint set is enlarged if the upper bound in
(\ref{eqn:alphasquarebound}) is higher.  We should therefore choose
the $c_{l,k}$ that minimizes $\mathbb{E}\left[\left(\theta_k -
c_{l,k}\right)^2 | I_{k-1}\right]$, i.e., $c_{l,k}$ is the minimum
mean squared error (MMSE) estimate of $\theta_k$ given $I_{k-1}$.
Clearly this is given by $ c_{l,k}  =
\mathbb{E}\left[\theta_k|I_{k-1}\right]
 =  m_1\beta_{k-1}+m_0(1-\beta_{k-1}),
$ and is independent of $l$. Moreover,
\begin{eqnarray*}
\mathbb{E}\left[\left(\theta_k - c_{l,k}\right)^2 | I_{k-1}\right] & = &
\textsf{Var}\left\{\theta_k|I_{k-1}\right\} \\
 & = &  (m_1-m_0)^2\beta_{k-1}(1-\beta_{k-1}),
\end{eqnarray*}
and (\ref{eqn:alphasquarebound}) can be written as $ \alpha_{l,k}
\leq \alpha_{\textsf{max},l,k}, $ where
\[
  \alpha_{\textsf{max},l,k} = \left( P_l/\left(\sigma_{\textsf{obs},l}^2
  +(m_1-m_0)^2\beta_{k-1}(1-\beta_{k-1})\right) \right)^{1/2}.
\]

\textbf{Step 3}:
Ignoring the time index $k$, the optimization problem to obtain
the best $\alpha$ is:

\vspace*{.1in}

\begin{problem}
\label{prob:optialpha}
Minimize
\[
\left(\sum_{l=1}^Lh_l\alpha_l\right)^{-2} \left[ \sum_{l=1}^L(\sigma_{\textsf{obs},l}h_l\alpha_l)^2+\sigma_{\textsf{MAC}}^2 \right],
\]
where $\alpha_l \in [0, \alpha_{\textsf{max},l}]$ for $l=1, \cdots, L$.
\end{problem}

\vspace*{.1in}

This is not a convex optimization problem. However, we can split it
into two simpler convex optimization problems to get
an explicit solution to Problem \ref{prob:optialpha}.

\vspace*{.1in}

\begin{lemma}\label{lemma:optimization}
Algorithm \ref{alg:controls} solves Problem \ref{prob:optialpha}.
\end{lemma}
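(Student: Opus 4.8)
The plan is to show that Algorithm \ref{alg:controls} produces the global minimizer of the (non-convex) objective in Problem \ref{prob:optialpha} by a substitution that renders the problem convex on each of two natural regions. First I would introduce the auxiliary variables $\beta_l = h_l \alpha_l$ and $s = \sum_{l=1}^L \beta_l = \sum_l h_l\alpha_l$, so the objective becomes $f(\beta) = s^{-2}\bigl(\sum_l \sigma_{\textsf{obs},l}^2 \beta_l^2 + \sigma_{\textsf{MAC}}^2\bigr)$ with box constraints $\beta_l \in [0, h_l\alpha_{\textsf{max},l}]$. The key observation is that for a \emph{fixed} value of $s$, minimizing $f$ reduces to minimizing the convex quadratic $\sum_l \sigma_{\textsf{obs},l}^2\beta_l^2$ subject to $\sum_l \beta_l = s$ and the box constraints; this inner problem is a standard water-filling-type convex program whose Karush--Kuhn--Tucker (KKT) conditions I would write out explicitly. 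The solution has the threshold structure claimed in the algorithm: sensors are ordered by $\sigma_{\textsf{obs},l}^2 h_l \alpha_{\textsf{max},l}$, those below a cutoff are pushed to their box maximum $\beta_l = h_l\alpha_{\textsf{max},l}$, and those above share the residual mass $s - \sum_{l\le k} h_l\alpha_{\textsf{max},l}$ in inverse proportion to $\sigma_{\textsf{obs},l}^2$ (equivalently, $\sigma_{\textsf{obs},l}^2\beta_l$ equal across the active set), which is exactly (\ref{eqn:alpha}).

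Next I would substitute this inner optimum back and obtain the \emph{outer} problem: a one-dimensional minimization over $s$ of a function $F(s)$ that, on each interval of $s$ where the active set is constant, is a ratio of the form $(\text{quadratic in } s)/s^2$. On such an interval $F(s) = s^{-2}\bigl(c_1 + (s - c_2)^2/c_3\bigr)$ for constants $c_1,c_2,c_3>0$ depending only on the fixed active set; differentiating, $F'(s)$ has the sign of a single affine expression in $s$, so $F$ is unimodal (quasi-convex) on the interval with a unique stationary point. I would then check that the stationary point of the relevant piece is exactly $a^* $ of (\ref{eqn:astar}): setting $F'(s)=0$ on the piece with active set $\{k+1,\dots,L\}$ gives $s\bigl(\sum_{l=k+1}^L\sigma_{\textsf{obs},l}^{-2}\bigr) = \bigl(\sum_{l=1}^k h_l\alpha_{\textsf{max},l}\bigr)\bigl(\sum_{l=k+1}^L\sigma_{\textsf{obs},l}^{-2}\bigr) + \sum_{l=1}^k(\sigma_{\textsf{obs},l}h_l\alpha_{\textsf{max},l})^2 + \sigma_{\textsf{MAC}}^2$, matching $a^*$. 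Finally I would verify that $a^*$ falls in the interval determined by the cutoff index $k$ selected in Step 1 — this is precisely what inequality (\ref{eqn:condition}) guarantees — and that across adjacent intervals the one-sided derivatives of $F$ are consistent (the piecewise function is $C^1$), so the local minimum at $a^*$ is the global minimum; if (\ref{eqn:condition}) has no solution the minimizer sits at the boundary $k=L$, the case handled separately in the algorithm.

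The main obstacle I anticipate is the gluing argument in the outer problem: establishing that the piecewise-defined $F(s)$ is globally quasi-convex (so that a single stationary point is the global optimum) rather than merely quasi-convex on each piece. This requires showing the pieces fit together with monotone one-sided derivatives, which in turn rests on the monotonicity of the ordering $\sigma_{\textsf{obs},1}^2 h_1\alpha_{\textsf{max},1}\le\cdots\le\sigma_{\textsf{obs},L}^2 h_L\alpha_{\textsf{max},L}$ driving the order in which sensors saturate as $s$ increases. A clean way to handle this is to argue directly that any $\beta$ not of the algorithm's form can be strictly improved: if some active-set sensor has $\sigma_{\textsf{obs},l}^2\beta_l$ strictly larger than that of an unsaturated sensor $j$ with smaller scaled noise, a mass transfer from $l$ to $j$ keeps $s$ fixed and strictly decreases $\sum_l\sigma_{\textsf{obs},l}^2\beta_l^2$; combined with the one-dimensional analysis of $F(s)$ this exchange argument bypasses the need for a delicate global-quasi-convexity proof while still pinning down the optimum uniquely as the output of Algorithm \ref{alg:controls}.
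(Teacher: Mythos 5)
Your decomposition is the same as the paper's: fix the sum $a=\sum_l h_l\alpha_l$, solve the inner separable convex program (the paper invokes the algorithm of Padakandla--Sundaresan where you would write KKT conditions directly; either yields (\ref{eqn:alphalower})--(\ref{eqn:alphaupper})), and then minimize the resulting one-dimensional piecewise function over $a$. The genuine gap is in how you propose to finish the outer step. The exchange (mass-transfer) argument you offer only re-derives the inner allocation for a \emph{fixed} value of $s$; it cannot compare allocations with different sums, so it says nothing about whether the stationary point on the piece selected by (\ref{eqn:condition}) beats the minima of the other pieces. It therefore does not ``bypass'' the gluing problem you correctly identified -- it leaves it open. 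The paper's missing ingredient here is the substitution $b=1/a$: on each interval $[1/a_{k+1},1/a_k]$ the function $g(b)=f(1/b)$ is a convex parabola, $g$ is continuous and $C^1$ at the breakpoints, hence $g'$ is nondecreasing and $g$ is \emph{globally convex}, so an interior zero of $g'$ lying in the interval consistent with its own index $k$ (exactly condition (\ref{eqn:condition})) is the global minimum, and otherwise $g'>0$ throughout and the minimum sits at $b=1/a_{\max}$, i.e.\ $k=L$. Your route through $F(s)$ can also be completed -- since on each piece $F'(s)$ has the sign of an increasing affine function, verifying $C^1$-matching at the breakpoints shows $F'$ changes sign at most once, giving global unimodality -- but that verification is precisely the work you propose to skip, and it (or the $b=1/a$ convexification) must actually be carried out.

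Two smaller points. First, your displayed stationarity equation is wrong as written: for $F(s)=s^{-2}\bigl(c_1+(s-c_2)^2/c_3\bigr)$ with $c_1=\sum_{l\le k}(\sigma_{\textsf{obs},l}h_l\alpha_{\textsf{max},l})^2+\sigma_{\textsf{MAC}}^2$, $c_2=\sum_{l\le k}h_l\alpha_{\textsf{max},l}$, $c_3=\sum_{l>k}\sigma_{\textsf{obs},l}^{-2}$, one gets $F'(s)=\frac{2}{s^3}\bigl[\frac{c_2(s-c_2)}{c_3}-c_1\bigr]$, so the stationary point satisfies $c_2(s-c_2)=c_1c_3$, i.e.\ $s=c_2+c_1c_3/c_2$, which is (\ref{eqn:astar}); your version $s\,c_3=c_2c_3+c_1$ gives $s=c_2+c_1/c_3$ and does not match $a^*$. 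Second, the nonemptiness of the intervals $[a_k,a_{k+1}]$ (needed so the piecewise description of $V(a)$ is well defined) follows from the assumed ordering of $\sigma_{\textsf{obs},l}^2h_l\alpha_{\textsf{max},l}$ and should be stated, as the paper does.
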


\vspace*{.1in}

See Appendix \ref{app:optimization} for a proof. This concludes the
proof of Theorem \ref{thm:controls}.
\end{proof}

\vspace*{.1in}

Under the restriction of affine controls,
Theorem \ref{thm:controls} describes the
optimal choice. However, affine controls are not optimal in general.
This is demonstrated in Fig. \ref{fig:clipping} where a piece-wise
linear sigmoidal control outperforms the optimal affine control (see
\cite[Sec. 2.7]{200706The_Zac}).  It would be interesting to see if
there are ranges of $\sigma^2_{\textsf{obs},l}$ and
$\sigma^2_{\textsf{MAC}}$ where the affine control is indeed
optimal. We do not pursue this question in this work.

\begin{figure}
\centering
\includegraphics[height=2.6in,width=3.49in]{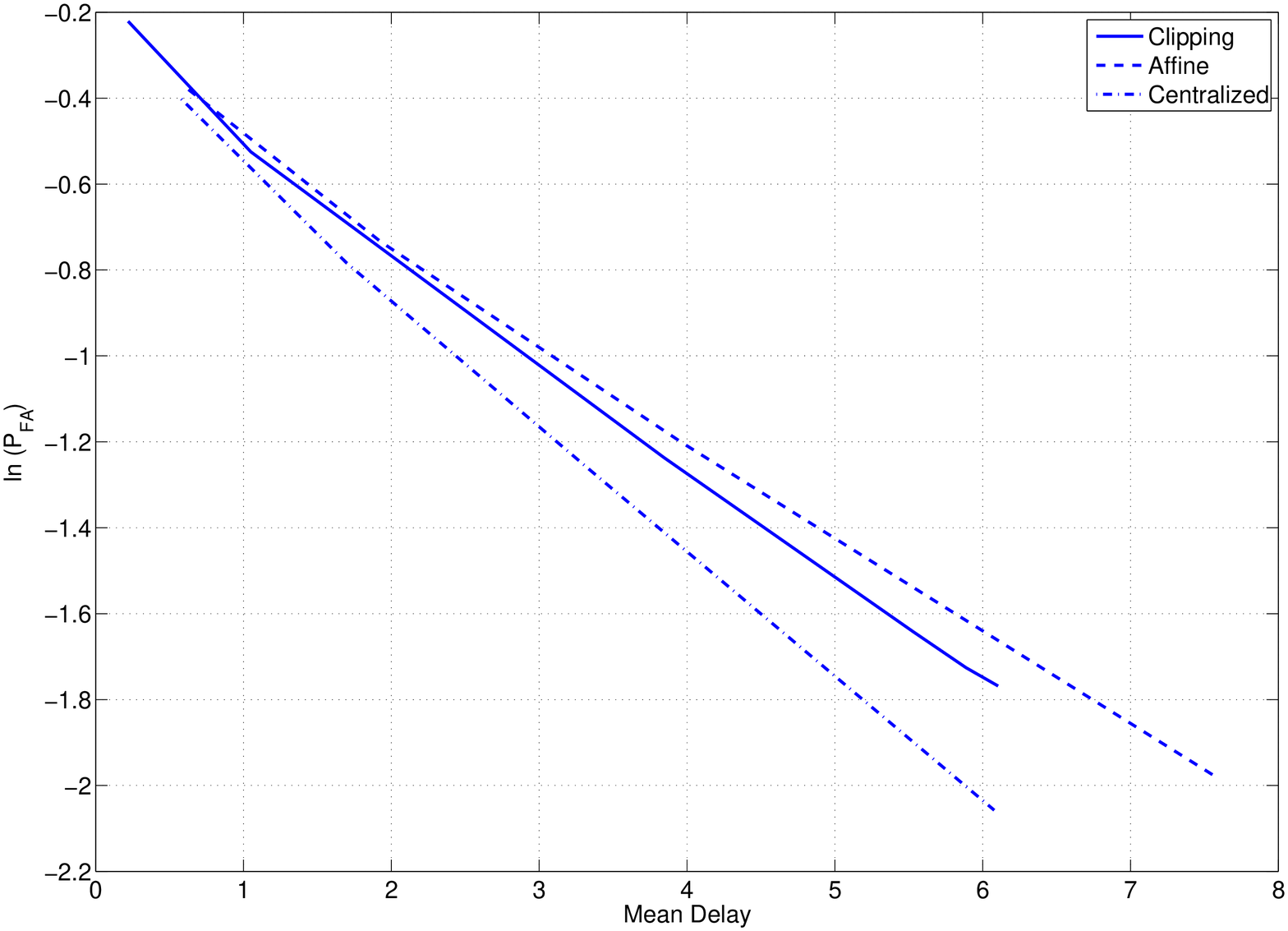}
\caption{Performance curves: 1) Clipped transmission via a sigmoidal
function 2) Affine transformation 3) Centralized, where all sensor
data is available without noise at the fusion center.}
\label{fig:clipping}
\end{figure}

We now make some remarks on the complexity of overall detection.
Theorem \ref{thm:controls} says that the parameters for optimal
control are obtained via a finite step procedure. Indeed, Algorithm
\ref{alg:controls} gives the output in time linear in the number of
sensors, and is therefore easy to execute. The threshold calculation
for a fixed set of parameters is a one time calculation and is
obtained via the so-called {\it value iteration procedure} which
yields an approximation. We now explore further simplifications with
reduced feedback information.

\subsection{A Simpler Suboptimal Policy}
\label{sec:suboptimal}

Let us now restrict the controls to be of the following form: the
decision to stop or continue, say $b_k$, depends on $I_k$, but the
parameters of the affine transformation at time $k+1$ can only
depend on $I_0$ and $b_k \in \{stop, continue\}$. $I_0$ denotes the
prior information before any observations are made and $b_k$ is the
decision of the fusion center at $k$. Note that this reduces the
amount of feedback to simply the binary random variable $b_k$.

The structure of the controls is similar to that of the optimal
policy of the previous section, but with
\[\beta_k =
\Pr\left\{\Gamma\leq k+1|I_0\right\} = 1-(1-\nu)(1-p)^{k+1}\]
so that
$(\alpha,c)$ depends on only $I_0$ and not on $I_k$. The stopping
policy is chosen as in Theorem \ref{thm:stopping}. As we see in
simulation results presented in Section \ref{sec:performance}, the
performance of this algorithm is close to optimal for the chosen
parameters, yet requires feedback of only one bit at each stage.

\section{Energy-Constrained Formulation}
\label{sec:energy}
The energy-constrained problem is stated as follows.

\vspace*{.1in}

\begin{problem} \label{prob:energy}
Minimize the expected detection delay, $E_{\textsf{DD}}$, subject to
an upper bound on the probability of false alarm, $P_{\textsf{FA}} \leq
\delta$, and an
upper bound on the expected energy spent,
\begin{equation}
\label{eqn:energyConstraint}
\mathbb{E}\left[\sum_{k=1}^{\tau}\mathbb{E}\left[\phi_{l,k}^2(X_{l,k})
|I_{k-1}\right]\right] \leq E_l, \quad  l=1,2,\ldots,L.
\end{equation}
\end{problem}

\vspace*{.1in}

Let $\lambda=(\lambda_1,\ldots,\lambda_L,\lambda_{L+1})$. As before,
to solve Problem \ref{prob:energy}, we set up  the Bayes cost
$R(\lambda)$ and minimize it over all admissible choices
of stopping policy and the parameters $\alpha_{l,k}$ and $c_{l,k}$ of the affine transformation
$\phi_{l,k}$. The Bayes cost can be
written as
\begin{eqnarray*}
R(\lambda) & = &  \mathbb{E}\Big[ (1 - \mu_{\tau}) + \lambda_{L+1}\sum_{k=0}
^{\tau-1} \mu_k  \\
&&  ~+~ \sum_{k=1}^{\tau}\sum_{l=1}^L\lambda_l
\mathbb{E}\left[\alpha_{l,k}^2(X_{l,k}-c_{l,k})^2|I_{k-1}\right]\Big].\nonumber
\end{eqnarray*}
A result analogous to Theorem \ref{thm:stopping} in Section \ref{sec:policy} holds, and the optimal
control at time $k+1$, given $I_k$, is such that $c_{k+1}$ is independent
of $l$, the sensor index. More precisely,
\begin{eqnarray}
  \label{eqn:minc-energyConstrained}
c_{k+1}  =  m_1\beta_k+m_0(1-\beta_k),\quad l = 1,\ldots,L,\nonumber
\end{eqnarray}
\begin{eqnarray}
\alpha_{k+1}\!\!\! & = & \!\!\! \arg \! \min_{\alpha \in
\mathbb{R}_+^L} \! \Bigg[ \! \sum_ {l=1}^L \!
\lambda_l\alpha_l^2\left(\sigma_{\textsf{obs},l}^2\!+\!
(m_1 \! - \! m_0)^2\beta_k(1 \! - \! \beta_k)\right)  \nonumber \\
 & &+\! \int_{\mathbb{R}} \! {J \!
\left(\!\frac{g\left(\hat{y},\alpha,\mu_k\right)}{h\left
(\hat{y},\alpha,\mu_k\right)}\!\right) \!
h\!\left(\hat{y},\alpha,\mu_k\right)d\hat{y}}\Bigg]\nonumber,
\end{eqnarray}
where  $J(\mu)  =  \min
\left\{1-\mu,\lambda_{L+1}\mu+A_J(\mu)\right\},$ is the infinite
horizon cost-to-go function with
\begin{eqnarray}
 \label{eqn:meancostenergy} A_J(\mu)\! & = & \! \min_{\alpha \in \mathbb{R}_+^L}
\Big[\sum_
{l=1}^L\lambda_l\alpha_l^2\left(\sigma_{\textsf{obs},l}^2+(m_1-m_0)^2\beta(1-\beta)\right) \nonumber \\
& & +
 \int_{\mathbb{R}} {J\left(\frac
{g\left(\hat{y},\alpha,\mu\right)}{h\left(\hat{y},\alpha,\mu
\right)}\right)h\left(\hat{y},\alpha,\mu\right)d\hat{y}}\Big]\nonumber.
\end{eqnarray}
A minimizing control $\alpha$ does exist as is shown in \cite[Sec. 3.1]{200706The_Zac}.

\section{Comparisons and practical considerations}
\label{sec:performance}

\subsection{Benefits from Exploiting Sensor Correlation}

Veeravalli \cite{200105TIT_Vee} addresses the structure of optimal
$D_l$-level quantizer at sensor $S_l,l=1,2,\ldots,L$. His model is
applicable to a system that allows $\log_2 D_l$ bits to be sent
error-free from sensor $S_l$ to the fusion center. For simplicity
let $D_l=D,l=1,2,\ldots,L$. In order to show the benefit of
exploiting correlation of observations when transmitting across the
GMAC, we do the following. The quantized bits from the sensors in
Veeravalli's scheme are transmitted using an optimal scheme designed
for independent data streams over a coherent GMAC. If all sensors operate at
the same transmission power, the $\textsf{SNR}$ required to support
such a transmission on the GMAC satisfies the sum rate constraint $
  L\log_2 {D} \leq (1/2) \log_2 {\left(1+L\cdot\textsf{SNR}\right)},
$
and thus
\begin{equation}
\label{eqn:GMACsnr}\textsf{SNR} \geq \frac{D^{2L}-1}{L}.
\end{equation}

For the simulations, we assume two sensors ($L=2$) with equal gains,
i.e., $h_l = 1$ for $l=1,2$. We also assume one-bit quantizers
($D=2$). From (\ref{eqn:GMACsnr}) we get $\textsf{SNR} \geq 7.5$.
Algorithms operate at $\textsf{SNR} = 7.5$ with $P_l = 7.5$ for
$l=1,2$ and $\sigma_{\textsf{MAC}}^2 = 1$. We now summarize the
other simulation assumptions which will be used unless stated
otherwise.

\vspace*{.1in}

\begin{simulation}\label{simulation}
Consider $L=2$ sensors with $\mathcal{N}(0,1)$ and
$\mathcal{N}(0.75,1)$ observations before and after the change,
respectively. The geometric parameter $p=0.05$ and the initial
probability of change $\nu=0$. We obtain $J(\mu)$ via value
iteration procedure until the difference between successive iterates
falls below $0.0001$ with $1000$ points on the $\mu$ axis. All
simulations assume $P_l = P$ and $\sigma_{\textsf{obs},l}^2 = 1$ for $l=1,2$.
\end{simulation}

\vspace*{.1in}

Fig. \ref{fig:veeravalli} shows that both our algorithms give lesser delays
than Veeravalli's algorithm that is naively overlaid on the GMAC. Furthermore, the suboptimal
policy of Section \ref{sec:suboptimal} degrades from that in Section
\ref{sec:policy} only for low false alarm probabilities.

\begin{figure}
\centering
\includegraphics[height=2.6in,width=3.49in]{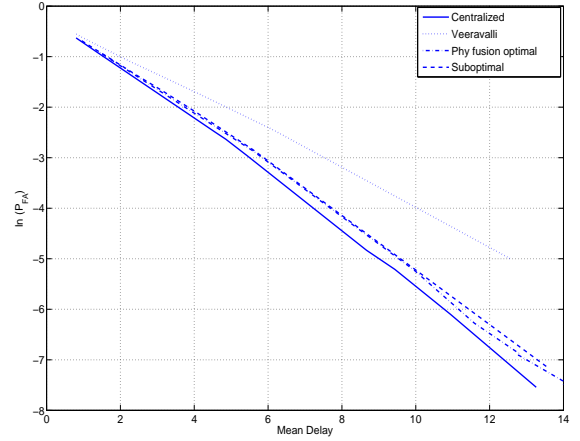}
\caption{Comparison of our algorithms with Veeravalli's scheme. The
``centralized'' performance curve is for the case when all sensor
data is available without noise at the fusion center.} \label{fig:veeravalli}
\end{figure}

In Veeravalli's algorithm,
$D-1$ thresholds ($\in \mathbb{R}^{D-1}$)
and a decision to stop or continue are fed back to each sensor. Our scheme requires
feedback of $\alpha_l \in \mathbb{R}_+, c_l \in \mathbb{R}$, and
the binary decision.
Even simpler is the strategy in Section \ref{sec:suboptimal}; only a
binary decision is fed back.

The network delay is independent of the number of sensors in both
our algorithms; the performance improves with increasing number of
sensors. Veeravalli's scheme on the other hand requires an
exponential growth in \textsf{SNR} (with $L$, as in
(\ref{eqn:GMACsnr})) to maintain the same delay versus
$P_{\textsf{FA}}$ performance. Our algorithms need a higher level of time and frequency synchronization of the transmitters for
beamforming. Section \ref{subsec:channelerrors} studies the
effect of lack of perfect channel knowledge. Transmit beamforming
can be achieved via uplink-downlink reciprocity in a static time-division
duplex (TDD) system (see \cite{200705TWC_MudBarMad} for an example mechanism).

\subsection{Performance Comparisons Under Different Channel and Observation SNRs}
We now portray performance under three different settings.
\begin{itemize}
  \item Fig. \ref{fig:phy_comp_chanSNR} shows performance for various channel SNRs $P/\sigma_{\textsf{MAC}}^2$; the other parameters remain as in Simulation Setup \ref{simulation}.
  \item Fig. \ref{fig:phy_comp_obsSNR} shows performance for various observation SNRs $(m_1-m_0)^2/\sigma_{\textsf{obs}}^2$ when the channel SNR $P/\sigma_{\textsf{MAC}}^2$ is fixed at 3 dB.
  \item Fig. \ref{fig:phy_asym} compares the symmetric and asymmetric channel
gain cases. The symmetric curve is obtained with $h_l = 1$ for $l = 1,2$, and the asymmetric one with $h_1=1$ and $h_2=0.75$. The weaker sensor is 2.5 dB lower than the stronger one.
\end{itemize}
The plots show graceful degradation with decreasing SNR with results along expected lines.

\begin{figure}
\centering
\includegraphics[height=2.65in,width=3.49in]{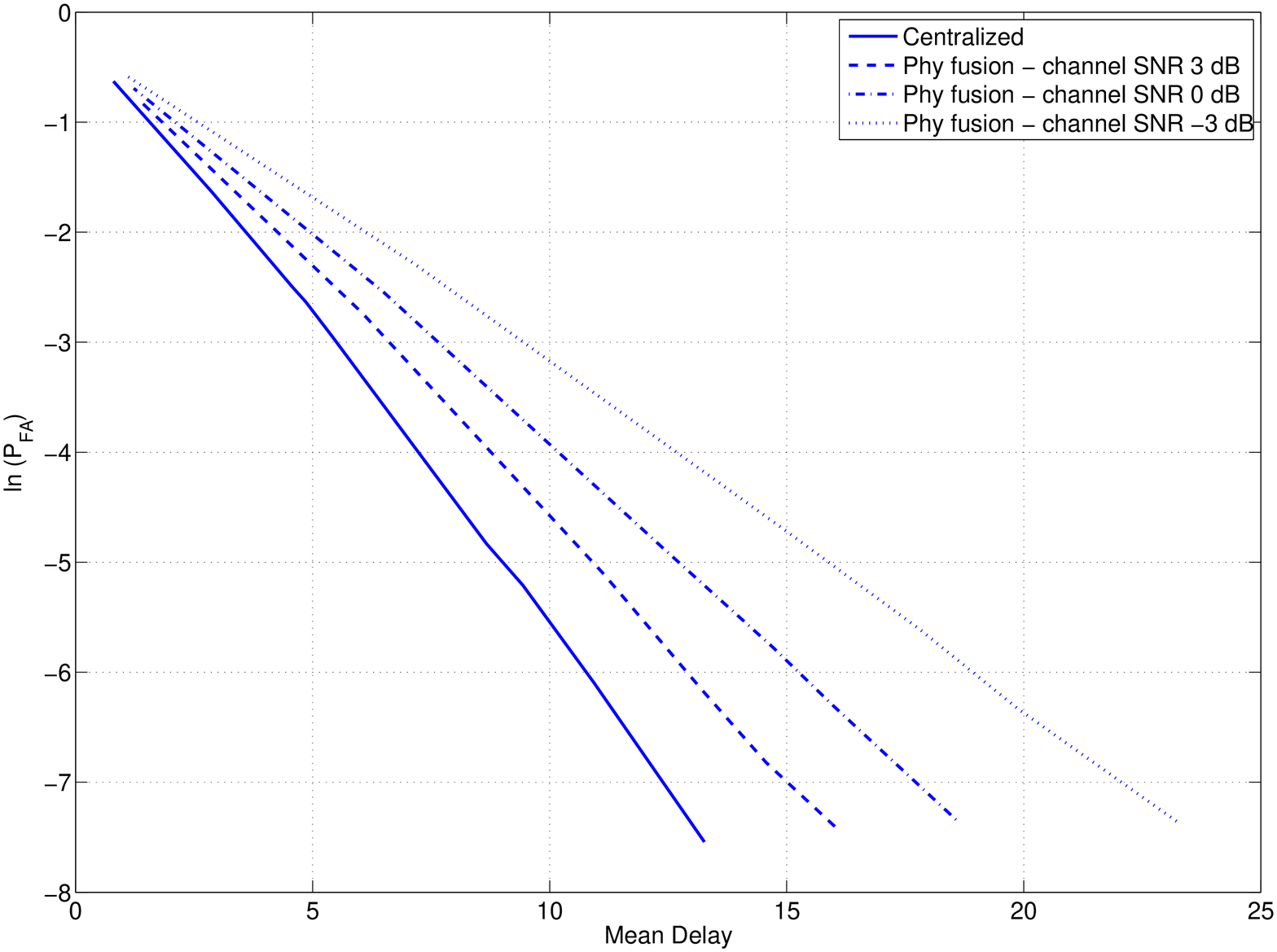}
\caption{Performance curves for channel SNR = $\infty, 3, 0, -3$
dB.} \label{fig:phy_comp_chanSNR}
\end{figure}

\begin{figure}
\centering
\includegraphics[height=2.65in,width=3.49in]{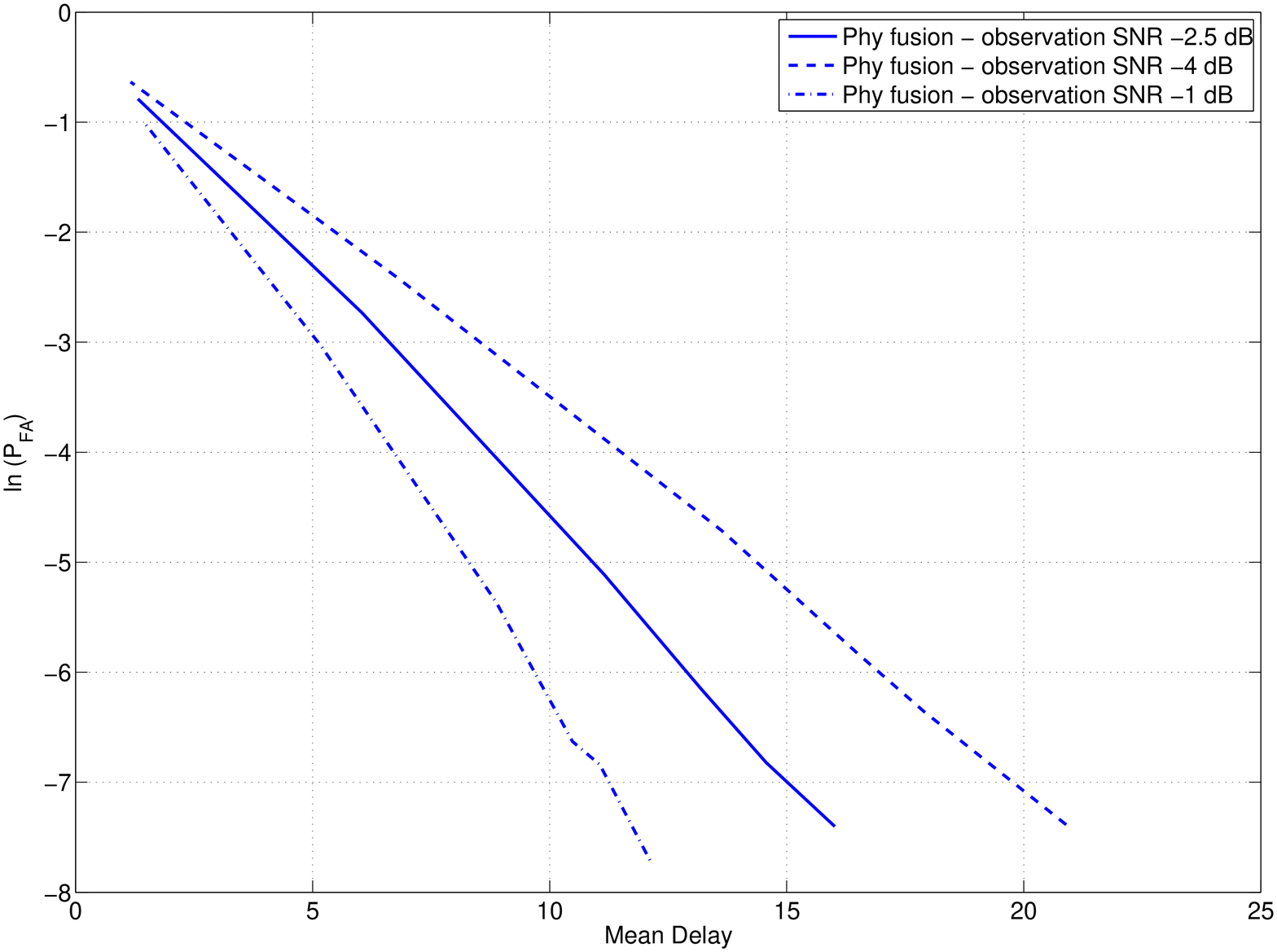}
\caption{Performance curves for observation SNR = $-1, -2.5, -4$
dB.} \label{fig:phy_comp_obsSNR}
\end{figure}

\begin{figure}
\centering
\includegraphics[height=2.65in,width=3.49in]{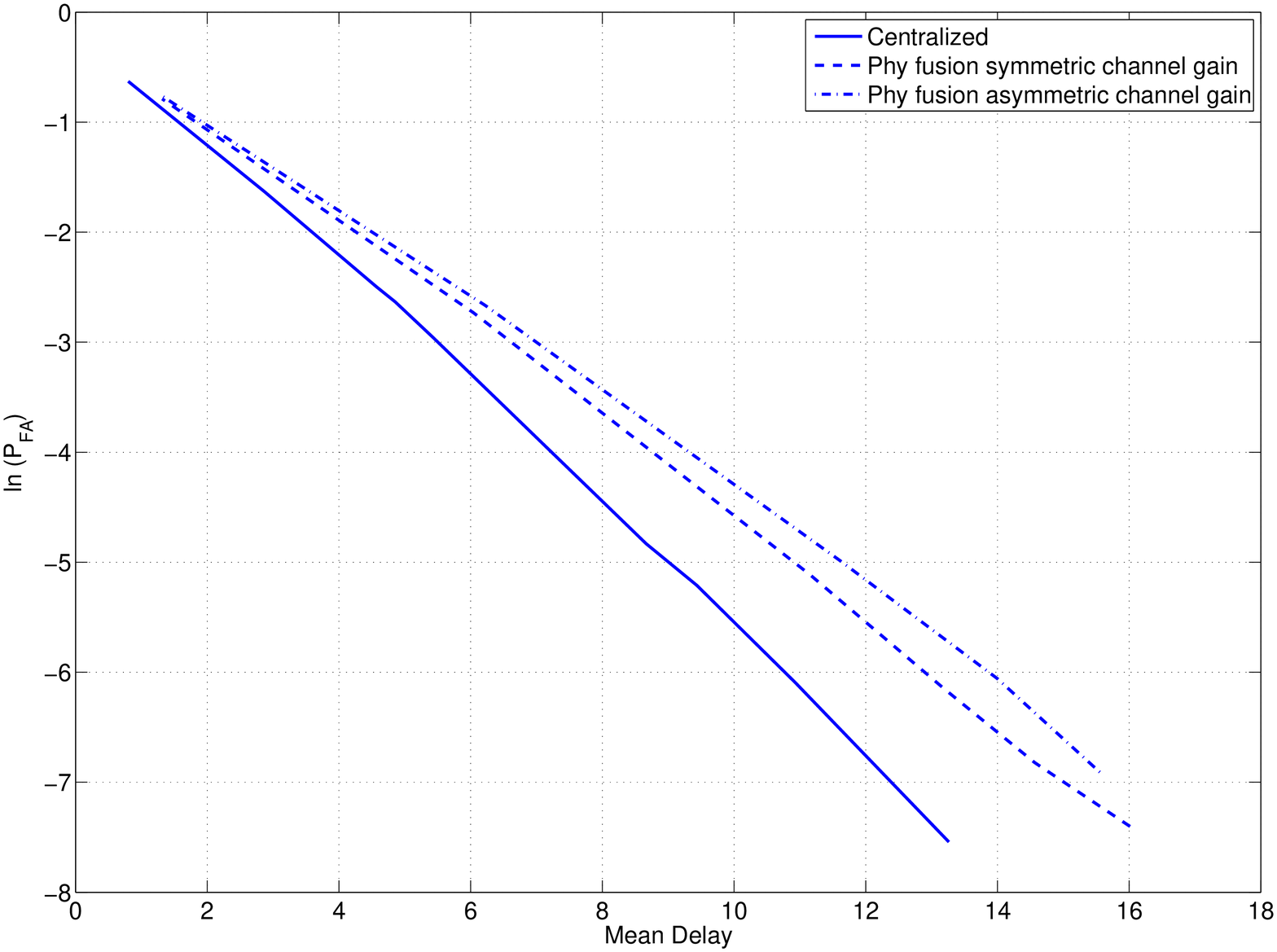}
\caption{Performance curves when 1) centralized (no channel noise)
2) symmetric channel gains 3) asymmetric channel gains with the
weaker sensor 2.5 dB lower.} \label{fig:phy_asym}
\end{figure}

\subsection{Comparison of Power- and Energy-Constrained Formulations}

For $P_{\textsf{FA}} \leq e^{-4}$, we first identify the minimum
time to detect change as a function of the energy constraint. This
yields a power constraint for the constant power formulation.
We then compare the delays incurred by the optimal algorithm under
the two formulations in Fig. \ref{fig:powerenergy}. We use the parameters in
Simulation Setup \ref{simulation} and $h_l = 1$ for all sensors. For the same
$P_{\textsf{FA}}$, the energy-constrained solution declares a change
with lesser delay than the constant power solution.

As an illustration, we plot in Fig. \ref{fig:samplepath} the variation of
$\alpha^2$, $c$, and $\mu$ with time in both the algorithms for a
representative sample path.  The change point is at $21$ samples,
shown using a dotted vertical grid line. The energy-constrained
solution is more energy efficient because it uses lower energy
($\alpha^2$) before and higher energy after the change point.
Indeed, based on the prior information, the first few samples use
negligible energy.

\begin{figure}
\centering
\includegraphics[height=2.6in,width=3.49in]{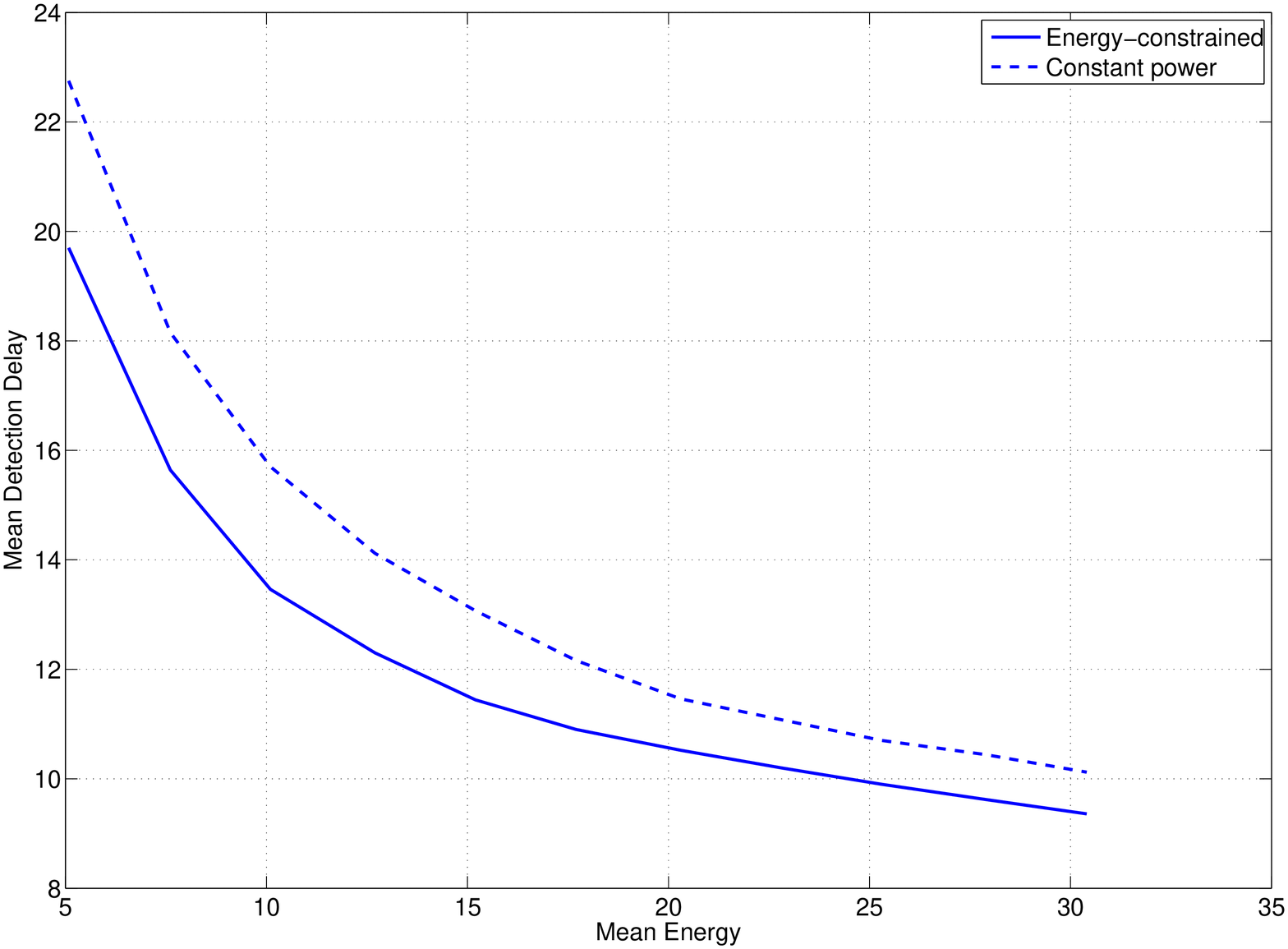}
\caption{Comparison of constant power method and energy-constrained
method.} \label{fig:powerenergy}
\end{figure}

\begin{figure}
\centering
\includegraphics[height=2.6in,width=3.49in]{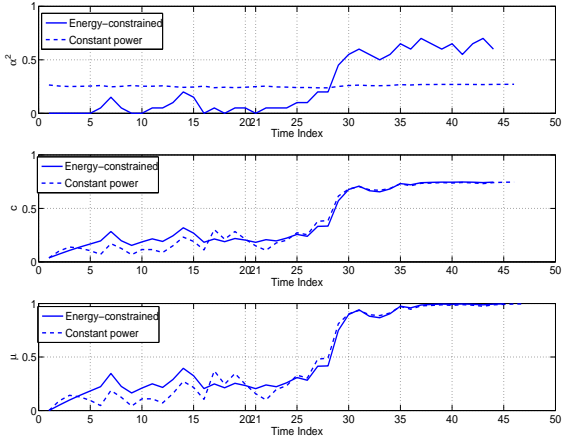}
\caption{$\alpha^2$, $c$, and $\mu$ of constant power method and
energy-constrained method for a sample path.} \label{fig:samplepath}
\end{figure}

\subsection{Channel Estimation Errors}
\label{subsec:channelerrors}

Thus far we assumed a static channel with perfect knowledge
available at both transmitter and receiver. Wireless channels,
however, change with time. Only an estimate of the channel, based on
signal processing on the pilots, beacons, or preambles, may be
available. In this section, we study the effect of imperfect channel
knowledge on the physical layer fusion algorithm.

To arrive at a model for channel errors, we consider complex channel
gains over the GMAC with noise given by $Z_{\textsf{MAC},k} \sim
\mathbb{C}\mathcal{N}(0,\sigma_{\textsf{MAC}}^2)$, a circular
symmetric complex Gaussian random variable. The observations are
real-valued, but the complex baseband equivalent signal has two
real-valued degrees of freedom per sample, leading to a bandwidth
expansion factor of two. Suppose that the sensors use transmit
beamforming\footnote{The optimality of cooperative transmit
beamforming by sensors remains an open question.}, i.e., $\alpha_l =
\frac{h_l^*} {|h_l|}\gamma_l$. Then it is sufficient to preserve
only the real part of the received signal at the fusion center, and
the problem reduces to that studied in the earlier parts of this
paper with $\sigma_{\textsf{MAC}}^2$ replaced by
$\sigma_{\textsf{MAC}}^2/2$ in Section \ref{sec:power}. The quantity
$\gamma_l$ replaces $\alpha_l$ and $|h_l|$ replaces $h_l$ in
Algorithm \ref{alg:controls}. The output of the algorithm is
$\gamma_l$.

Let $\{ h_l \}$ be a sequence of $\mathbb{C}\mathcal{N}(0,1)$ random
variables that obey a block-fading model, i.e., the channel remains
constant for $T$ uses and then changes to an independent channel
gain. If $K$ of these $T$ samples are available for channel
estimation, then the MMSE estimate of the channel is $\hat{h}_l =
(h_l+rZ)/(1+r^2)$, where $r = \sigma_{\textsf{MAC}}/\sqrt{KP_l}$,
$P_l$ is the power of sensor $l$ and $Z \sim
\mathbb{C}\mathcal{N}(0,1)$. This is estimated at both ends (using
TDD system's channel reciprocity).

Figures \ref{fig:mmse2} and \ref{fig:mmse3} show performance of the policy of Section \ref{sec:controls} with $\hat{h}$ used in place of actual $h$, across different channel SNRs. Simulation Setup \ref{simulation} parameters are used. $K=1$, i.e., only one sample pilot is used for channel estimation so that transmit beamforming is only loosely enabled. The pilot SNR equals the channel SNR in Fig. \ref{fig:mmse2} and is 8.75 dB lower in Fig. \ref{fig:mmse3}. The top-left subplot in Fig. \ref{fig:mmse2} shows that the transmit beamforming scheme with estimation errors is indeed superior to Veeravalli's scheme on a coherent GMAC. Fig. \ref{fig:mmse3} shows no benefit because the pilot SNR is not sufficient. Other subplots show graceful degradation with decreasing SNR.

\begin{figure}
\centering
\includegraphics[height=2.6in,width=3.49in]{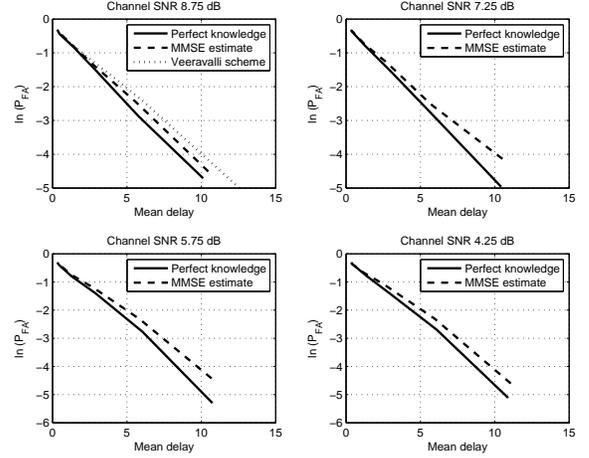}
\caption{Performance curves comparing the cases when 1) channel is perfectly known
2) MMSE estimates are used (Pilot SNR = Channel SNR).}
\label{fig:mmse2}
\end{figure}

\begin{figure}
\centering
\includegraphics[height=2.6in,width=3.49in]{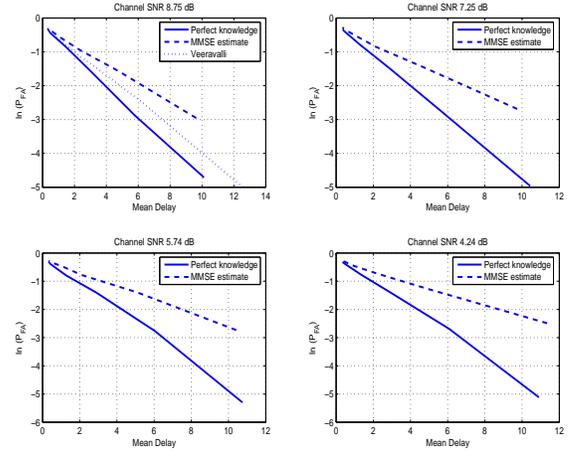}
\caption{Performance curves comparing the cases when 1) channel is perfectly known
2) MMSE estimates are used (Pilot SNR is 8.75 dB lower than Channel SNR).}
\label{fig:mmse3}
\end{figure}

\section{Summary}

We considered the use of an analog transmission strategy via an
affine transformation in order to exploit correlation in the sensor
observations. The goal was to detect a change with minimum expected
detection delay given an upper bound on the false alarm rate. We
modeled the problem as a Markov decision problem with partial
observations. We characterized the optimal control as one that
maximizes an Ali-Silvey distance between the two hypotheses before
and after the change (Appendix \ref{app:AliSilvey}). In the GMAC
setting, the optimal strategy minimizes the error variance of an
equivalent observation at the fusion center. We then gave an
explicit algorithm to identify the optimal control parameters.

We also studied a suboptimal policy that traded performance for
quantity of information fed back. We then demonstrated via
simulation the performance gain achieved by our algorithm over
another scheme that makes only a naive use of the GMAC. The latter is a
multi-access strategy optimal for independent data coupled with an
optimal distributed quantization scheme for change detection; it is
suboptimal because it does not exploit the correlation in sensor
observations. Our proposed algorithm exploits this correlation via transmit beamforming on
the GMAC. Given the control feedback in our setting, optimal
transmission strategies will change from channel to channel.
Techniques based on separation principles are therefore likely to
be suboptimal.

Distributed transmit beamforming is crucial to realize our proposed scheme.
The master-slave architecture of Mudumbai, Barriac \& Madhow
\cite{200705TWC_MudBarMad} and associated channel sensing techniques
can be used for frequency and phase synchronization. Simulations with channel estimation errors
indicate that the degradation due to lack of perfect channel
knowledge is tolerable, making this analog technique a viable
option for implementation.

We then considered a constraint on the average energy expended
instead of a power constraint. We demonstrated via simulation that
this made better use of the scarce energy resource. Extensions to
arbitrary but known distributions, in particular to the exponential
family, and to $M$-ary hypotheses can be found in \cite[Ch. 5]{200706The_Zac}.

\appendices

\section{A Characterization of Optimal Control}
\label{app:AliSilvey}

The following characterization of $A_J(\mu)$ was used in identifying
the optimal controls. The characterization refers to a
quantification of dissimilarity between probability measures called
Ali-Silvey distances (\cite{1966xxJRSS_AliSil}). Relative entropy (Kullback-Leibler divergence) is one example.
Such dissimilarity measures have a well-known
monotonicity property: data processing, whether deterministic or
random, cannot increase the dissimilarity measure between two
distributions (\cite{1966xxJRSS_AliSil}, \cite{1967xxSSMH_Csi}).
This characterization may be of interest in other
sequential detection settings.

\begin{thm}\label{thm:AliSilvey}
The minimization in (\ref{eqn:meancost}) is obtained via a
maximization of an Ali-Silvey distance between the density functions
$f_{m_1,\alpha}$ and $f_{m_0,\alpha}$.
\end{thm}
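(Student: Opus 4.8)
The plan is to show that the quantity $1 - A_J(\mu)$—more precisely, $1 - \mathbb{E}_h[J(g(\widehat Y,\alpha,\mu)/h(\widehat Y,\alpha,\mu))]$ for the single-step look-ahead—has the structural form of an Ali-Silvey distance between $f_{m_1,\alpha}$ and $f_{m_0,\alpha}$, so that minimizing the expected cost-to-go is the same as maximizing that distance. Recall that an Ali-Silvey distance between two densities $f_1, f_0$ has the form $d(f_1,f_0) = G\!\left(\mathbb{E}_{f_0}\!\left[\,C\!\left(f_1/f_0\right)\right]\right)$ for some increasing $G$ and convex $C$ (or an equivalent formulation in terms of a mixture); the defining feature is a convex function composed with the likelihood ratio, integrated against one of the measures. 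So the first step is to write the inner expectation $\mathbb{E}_h[J(\psi(\widehat Y,\mu,\alpha))]$ as an integral against the mixture density $h(\widehat y,\alpha,\mu) = \beta f_{m_1,\alpha}(\widehat y) + (1-\beta) f_{m_0,\alpha}(\widehat y)$, with the argument of $J$ being $g/h = \beta f_{m_1,\alpha}/(\beta f_{m_1,\alpha} + (1-\beta) f_{m_0,\alpha})$—a monotone function of the likelihood ratio $f_{m_1,\alpha}/f_{m_0,\alpha}$.

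The second step is to exhibit the convexity structure. By Lemma~\ref{lemma:concavity}, $J$ is concave and non-negative on $[0,1]$ with $J(1)=0$; hence $\mu \mapsto J(\mu)$ is concave, and I would like to conclude that the map $t \mapsto (\beta t + (1-\beta)(1-p)\cdot\text{(something)})\, J(\ldots)$—i.e. the integrand viewed as a function of the likelihood ratio $t = f_{m_1,\alpha}(\widehat y)/f_{m_0,\alpha}(\widehat y)$—is, after the change of variables integrating against $f_{m_0,\alpha}$, a \emph{convex} function of $t$ multiplied by $f_{m_0,\alpha}$. Concretely, writing $\widehat y$-integrals against $f_{m_0,\alpha}$, one gets $\mathbb{E}_h[J(\psi(\cdot))] = \mathbb{E}_{f_{m_0,\alpha}}[\,(\beta t + (1-\beta))\, J(\beta t/(\beta t + (1-\beta)))\,]$ with $t = f_{m_1,\alpha}/f_{m_0,\alpha}$. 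Defining $C(t) := (\beta t + (1-\beta))\, J(\beta t/(\beta t + (1-\beta)))$, the task is to verify $C$ is convex on $[0,\infty)$: this is a standard "perspective of a concave function" computation—the perspective $(t,s)\mapsto s\,\tilde J(t/s)$ of the concave function $\tilde J$ is concave, but here the outer sign and the affine reparametrization $s = \beta t + (1-\beta)$ flip it to convex in the single variable $t$. Then $\mathbb{E}_{f_{m_0,\alpha}}[C(f_{m_1,\alpha}/f_{m_0,\alpha})]$ is manifestly an Ali-Silvey-type functional (an $f$-divergence), and $A_J(\mu) = \min_{(\alpha,c)} \mathbb{E}_{f_{m_0,\alpha}}[C(f_{m_1,\alpha}/f_{m_0,\alpha})]$, so minimizing $A_J$ is maximizing the negative, which (post-composed with the increasing map $x\mapsto 1-x$, or left as is with "distance" read as the negated divergence up to the monotone $G$) is the asserted Ali-Silvey distance between $f_{m_1,\alpha}$ and $f_{m_0,\alpha}$.

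The main obstacle I anticipate is pinning down the exact Ali-Silvey form—i.e., correctly identifying the convex generator $C$ and the outer monotone transformation $G$ so that the statement "maximize an Ali-Silvey distance" is literally true rather than merely morally true. The subtleties are: (i) confirming that $\beta$ itself does not depend on $\alpha$ (it depends on $\mu$ and $p$ only, which is the case by the definition $\beta = \mu + (1-\mu)p$), so that the only $\alpha$-dependence sits inside the two densities $f_{m_i,\alpha}$, as the theorem requires; (ii) checking convexity of $C$ at the boundary $t=0$ and as $t\to\infty$, using $J(0)\ge 0$, $J(1)=0$, concavity, and the linear growth of the prefactor; and (iii) making sure the change of variables from an $\widehat y$-integral to a likelihood-ratio representation is valid (it is, since for the Gaussian location family $f_{m_1,\alpha}/f_{m_0,\alpha}$ is a strictly monotone function of $\widehat y$, giving a legitimate reparametrization). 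Once $C$ is shown convex, the monotonicity/data-processing property quoted from \cite{1966xxJRSS_AliSil},\cite{1967xxSSMH_Csi} is invoked as a black box, and the reduction in Step~1 of the proof of Theorem~\ref{thm:controls} closes the loop back to the minimum-variance criterion.
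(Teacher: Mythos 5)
Your overall route is essentially the paper's: rewrite the inner expectation in (\ref{eqn:meancost}) as an expectation, under one of the two hypothesis densities, of a fixed function of the likelihood ratio $t = f_{m_1,\alpha}/f_{m_0,\alpha}$ (the weight $\beta=\mu+(1-\mu)p$ being $\alpha$-free), and then use concavity of $J$ (Lemma \ref{lemma:concavity}) to recognize an Ali-Silvey functional. Your representation $\mathbb{E}_h[J(\psi)]=\mathbb{E}_{f_{m_0,\alpha}}[C(t)]$ with $C(t)=(\beta t+1-\beta)\,J\bigl(\beta t/(\beta t+1-\beta)\bigr)$ is correct (and needs no monotone change of variables; multiplying and dividing by $f_{m_0,\alpha}$ suffices).

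The gap is your central convexity claim: this $C$ is \emph{not} convex, and the stated reason (``the affine reparametrization flips it to convex'') is not valid. Writing $u(t)=\beta t$, $s(t)=\beta t+1-\beta$, you have $C(t)=s(t)\,J\bigl(u(t)/s(t)\bigr)$, i.e.\ the perspective of the concave $J$ composed with an affine map of $t$; the perspective of a concave function is concave and affine pre-composition preserves concavity, so $C$ is concave. (Sanity check: $J(\mu)=\min\{\mu,1-\mu\}$ and $\beta=1/2$ give $C(t)=\min\{t,1\}/2$, concave.) Hence $\mathbb{E}_{f_{m_0,\alpha}}[C(t)]$ is not an $f$-divergence, and your fallback of reading ``distance'' as the negated divergence does not repair this: the negative of an Ali-Silvey distance is generated by a concave function and is not itself an Ali-Silvey distance, so the theorem would be only ``morally'' true, which is what you set out to avoid. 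The fix is exactly the complement trick the paper uses: since $\mathbb{E}_{f_{m_0,\alpha}}[1]=1$, write $A_J(\mu)=1-\max_{(\alpha,c)}\mathbb{E}_{f_{m_0,\alpha}}\bigl[\widetilde C(t)\bigr]$ with $\widetilde C(t)=1-C(t)$, which is convex precisely because $C$ is concave; then $\mathbb{E}_{f_{m_0,\alpha}}[\widetilde C(f_{m_1,\alpha}/f_{m_0,\alpha})]$ is literally an Ali-Silvey distance between $f_{m_1,\alpha}$ and $f_{m_0,\alpha}$ and the theorem follows. The paper reaches the same endpoint in a slightly different order---it first forms the convex generator $1-J(\beta x)$ against the mixture $h$, then moves the base measure via $x\mapsto xC(1/x)$ and an affine substitution to obtain an expectation under $f_{m_1,\alpha}$ of a convex function of $f_{m_0,\alpha}/f_{m_1,\alpha}$---so your single change of base measure to $f_{m_0,\alpha}$ is fine once the sign is handled.
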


\begin{proof}
We first show that the minimization in (\ref{eqn:meancost}) can be expressed
as the maximization of an Ali-Silvey distance
$\mathbb{E}_{p_1}\left[C\left(\phi(\widehat{Y})\right)\right]$ between probability density functions
(pdf) $p_1$ and $p_2$ where
\[
  \phi(\hat{y})  =   \frac{p_2(\hat{y})}{p_1(\hat{y})}
  = \frac{f_{m_1,\alpha}(\hat{y})}{h(\hat{y},\alpha,\mu)},
\]
and $C$ is a convex function.
To see this, observe that both $p_1(.)$ and $p_2(.)$ are densities. The density $p_1$
is a mixture of pdfs under the two hypotheses while $p_2$ is the pdf
under $H_1$. Thus $g(\hat{y},\alpha,\mu)/h(\hat{y},\alpha,\mu) = \beta\phi(\hat{y})$, where $\beta = \mu+(1-\mu)p$.
From (\ref{eqn:meancost}), we have
\begin{eqnarray}
A_J(\mu)
 & = & \min_{\alpha, c} \mathbb{E}_{p_1}\left[J\left(\beta \phi(\widehat{Y})
\right)\right]
 =  \min_{\alpha, c} \mathbb{E}_{p_1}\left[G\left(\phi(\widehat{Y})
\right)\right] \nonumber \\
 & = &  1 - \max_{\alpha, c} \mathbb{E}_{p_1}\left[C\left
(\phi(\widehat{Y})\right)\right], \label{eqn:alisilvey}
\end{eqnarray}
where
$G(x) \stackrel{\triangle}{=}  J(\beta x)$ and
$C(x) \stackrel{\triangle}{=}  1-G(x)$.
$J$ is concave; so $G$ is concave, $C$ is convex, and
(\ref{eqn:alisilvey}) is obtained via a maximization of an Ali-Silvey distance between $p_1$ and $p_2$.
Now,
\begin{eqnarray}
\mathbb{E}_{p_1}\left[C\left(\phi(\widehat{Y})\right)\right] & = &
\mathbb{E}_{p_1}\left[C\left(\frac{p_2(\widehat{Y})}{p_1(\widehat{Y})}
\right)\right] \nonumber \\
& = & \mathbb{E}_{p_2}\left[\frac{p_1(\widehat{Y})}{p_2(\widehat{Y})}
C\left(\frac{p_2(\widehat{Y})}{p_1(\widehat{Y})}\right)
\right] \nonumber \\
& = & \mathbb{E}_{p_2}\left[C_1\left(\phi'
(\widehat{Y})\right)\right], \label{eqn:alisilvey1}
\end{eqnarray}
where
$
\phi'(\hat{y})  =  p_1(\hat{y}) / p_2(\hat{y}),
$
and
$
C_1(x)  =  x C \left( 1/x \right).
$
$C_1(x)$ is a convex function because $C(x)$ is convex and $x$ is nonnegative.
Now, let
$p_3(\hat{y})  =  f_{m_0,\alpha}(\hat{y})$. Since
\[
p_1(\hat{y}) / p_2(\hat{y})  = \beta+(1-\beta) p_3(\hat{y}) / p_2(\hat{y}),
\]
it is clear that
$
C_2(x) \stackrel{\triangle}{=} C_1\left(\beta+(1-\beta)x\right)
$
is a convex function. Setting $\phi''(\hat{y}) = p_3(\hat{y})/p_2
(\hat{y})$, the likelihood ratio between the original two hypotheses,
(\ref{eqn:alisilvey1}) can be written as $\mathbb{E}_{p_2}\left[C_2\left(
\phi''(\widehat{Y})\right)\right]$, an Ali-Silvey distance between
$f_{m_1,\alpha}$ and $f_{m_0,\alpha}$, and the theorem follows.
\end{proof}

\section{Proof of Lemma \ref{lemma:optimization}}
\label{app:optimization}
Here we solve Problem
\ref{prob:optialpha}. Order the indices so that
$\sigma_{\textsf{obs},1}^2h_1\alpha_{\textsf{max},1} \leq \cdots
\leq \sigma_{\textsf{obs},L}^2h_L\alpha_{\textsf{max},L}$. Let us first add a constraint
$\sum_{l=1}^Lh_l\alpha_l = a$, where without loss of generality $ a
\in \left[0, a_{\textsf{max}} \right], $ with $ a_{\textsf{max}} =
\sum_{l=1}^L h_l\alpha_{\textsf{max},l}, $ and solve the convex
optimization problem:

\vspace*{.1in}

\begin{problem}
\label{prob:convex}
Minimize
$
  \sum_{l=1}^L\sigma_{\textsf{obs},l}^2h_l^2\alpha_l^2
$
subject to
$
  \alpha_l \in \left[0, \alpha_{\textsf{max},l} \right],
$
$
\sum_{l=1}^Lh_l\alpha_l = a \in \left[0, a_{\textsf{max}}\right].
$
\end{problem}

\vspace*{.1in}

This problem is a special case of a separable convex
optimization problem studied in Padakandla and Sundaresan
\cite{200707SIOPTarXiv_PadSun}. Execution of \cite[Algorithm
1]{200707SIOPTarXiv_PadSun} yields the following solution. Break
$[0,a_\textsf{max}]$ into $L$ intervals $[a_k,a_{k+1}], k =
0,1,\ldots,L-1$, where $a_0 = 0$ and
\[
a_k  =
\left(\sum_{l=1}^kh_l\alpha_{\textsf{max},l}+\sigma_{\textsf{obs},k}^2h_k\alpha_{\textsf{max},k}
\sum_{l=k+1}^L \sigma_{\textsf{obs},l}^{-2}\right).
\]
The ordering of $\sigma_{\textsf{obs},l}^2h_l\alpha_{\textsf{max},l}$
implies $a_{m}$$ \leq a_{m+1}$ so that each interval is nonempty.
With $k$ such that $a \in [a_k, a_{k+1}]$, the optimal solution is:
\begin{eqnarray}
\label{eqn:alphalower}\alpha_l & = & \alpha_{\textsf{max},l},\quad l
= 1,
\ldots,k,\\
\label{eqn:alphaupper}\alpha_l & = & \frac{1}{\sigma_{\textsf{obs},l}^2h_l} \cdot \frac{a-\sum_{m=1}
^kh_m\alpha_{\textsf{max},m}}{\sum_{m=k+1}^L
\sigma_{\textsf{obs},m}^{-2}},\quad l > k.
\end{eqnarray}
The corresponding minimum value of Problem \ref{prob:convex} for a given $a$, denoted by $V(a)$, is given by
\[
V(a) =
\sum_{l=1}^k\sigma_{\textsf{obs},l}^2h_l^2\alpha_{\textsf{max},l}^2+\frac{\left(a-
\sum_{l=1}^kh_l\alpha_
{\textsf{max},l}\right)^2}{\sum_{l=k+1}^L\sigma_{\textsf{obs},l}^{-2}}.
\]
We next look for an optimal $a$ by solving

\vspace*{.1in}

\begin{problem}
Minimize $ f(a) = \frac{V(a) + \sigma_{\textsf{MAC}}^2}{a^2} $
subject to $a \in [0, a_{\max}]$.
\end{problem}

\vspace*{.1in}

While this is not yet a convex optimization, the transformation $b=1/a$ casts it into one. Define
\begin{eqnarray*}
\lefteqn{ g(b) =  f\left(\frac{1}{a}\right) } \\
&& =
b^2\left[\sum_{l=1}^k\sigma_{\textsf{obs},l}^2h_l^2\alpha_{\textsf{max},l}^2\!+\!
\sigma_\textsf{MAC}^2\!+\!\frac{\left(\sum_{l=1}^kh_l\alpha_{\textsf{max},l}\right)^2}
{\sum_{l=k+1}^L\sigma_{\textsf{obs},l}^{-2}}\right] \\
& & ~
- ~ 2b \cdot \frac{\sum_{l=1}^kh_l\alpha_{\textsf{max},l}}{\sum_{l=k+1}^L \sigma_{\textsf{obs},l}^{-2}}
+\frac{1}{\sum_{l=k+1}^L \sigma_{\textsf{obs},l}^{-2}},
\end{eqnarray*}
for $b \in [1/a_{\max}, \infty)$,
where $k$ depends on $b$ through the index of the interval in which $a = 1/b$ lies. The following observations on $g$ are easy to verify:
\begin{itemize}
\item $g(b)$ is a convex parabola on each $[1/a_{k+1}, 1/a_k], k = L-1, \cdots, 1$, and on $[1/a_1, \infty)$;
\item $g(b)$ is continuous in $[1/a_{\max},\infty)$. This needs checking only at interval boundaries $1/a_k$;
\item $g(b)$ is continuously differentiable in $(1/a_{\max},\infty)$ with left continuous derivative at $1/a_{\max}$;
\item $\lim_{b \rightarrow \infty} g'(b) = + \infty$, so that the derivative eventually becomes positive for large $b$.
\end{itemize}
Since $g$ is convex and continuously differentiable, if we can find a $b^*$ such that $g'(b^*) = 0$ and $b^* \in [1/a_{k+1}, 1/a_k]$ (or $[1/a_1, \infty)$) where $k$ corresponds to $a^* = 1/b^*$, then $b^*$ is a point of global minimum. This holds if the minimum point for a parabola defined in $[1/a_{k+1}, 1/a_k]$ (or $[1/a_1, \infty)$), which is easily verified to be
\begin{eqnarray*}
\lefteqn{ a^* = 1/b^* =
\sum_{l=1}^kh_l\alpha_{\textsf{max},l} } \\
&& + ~\frac{\sum_{l=k+1}^L \sigma_{\textsf{obs},l}^{-2}}{\sum_{l=1}^kh_l\alpha_{\textsf{max},l}} \cdot
\left( \sum_{l=1}^k(\sigma_{\textsf{obs},l}h_l\alpha_{\textsf{max},l})^2
 +\sigma_{\textsf{MAC}}^2 \right),
\end{eqnarray*}
also belongs to that interval. This leads to the condition (\ref{eqn:condition}). If no such point occurs, $g'(b) \neq 0$ in $[1/a_{\max}, \infty)$, and since $g'$ is eventually positive, it must be positive in the entire interval. In this latter case $g$ is an increasing function on $[1/a_{\max}, \infty)$ and the minimum is attained at $b^* = 1/a_{\max}$ or $a^* = a_{\max}$ or $k = L$. Substitution of $a^*$ in (\ref{eqn:alphalower}) and (\ref{eqn:alphaupper}) completes the proof.

\bibliography{journalPaper_revised_v6.bbl}

\end{document}